\documentclass[sort&compress]{elsarticle}

\usepackage[raiselinks=false,colorlinks=true,citecolor=blue]{hyperref}
\journal{arXiv}

\usepackage{IEEEtrantools}
\usepackage{qcircuit}
\usepackage{amsmath,amssymb,amsfonts,amsthm}
\newtheorem{theorem}{Theorem}
\newtheorem{corollary}[theorem]{Corollary}

\theoremstyle{definition}
\newtheorem{remark}[theorem]{Remark}

\usepackage{amsfonts}
\usepackage{tikz}
\usetikzlibrary{positioning,arrows.meta,angles,quotes}
\usepackage{flushend}

\newcommand{\bra}[1]{\langle #1 |}
\newcommand{\ket}[1]{| #1 \rangle}
\newcommand{\braket}[2]{\langle #1 | #2 \rangle}
\newcommand\ketbra[2]{\ket{#1}\!\bra{#2}}

\begin{document}

\begin{frontmatter}

\title{Classically time-controlled quantum automata: definition and properties}

\author[UNQ,ICC]{Alejandro D\'iaz-Caro}
\author[UNA]{Marcos Villagra}

\address[UNQ]{Universidad Nacional de Quilmes. Bernal, Buenos Aires, Argentina}
\address[ICC]{ICC, CONICET -- Universidad de Buenos Aires. Buenos Aires, Argentina}
\address[UNA]{Universidad Nacional de Asunci\'on. Asunci\'on, Paraguay}

\begin{abstract}
  In this paper, we introduce classically time-controlled quantum automata or
  CTQA, which is a reasonable modification of Moore-Crutchfield quantum finite
  automata that uses time-dependent evolution and a ``scheduler'' defining how
  long each Hamiltonian will run. Surprisingly enough, time-dependent evolution
  provides a significant change in the computational power of quantum automata
  with respect to a discrete quantum model. Indeed, we show that if a scheduler
  is not computationally restricted, then a CTQA could even decide the Halting
  problem. In order to unearth the computational capabilities of CTQAs we study
  the case of a computationally restricted scheduler. In particular, we showed
  that depending on the type of restriction imposed on the scheduler, a CTQA
  can (i) recognize non-regular languages with cut-point, even in the presence
  of Karp-Lipton advice, and (ii) recognize non-regular promise languages with
  bounded-error. Furthermore, we study the cutpoint-union of cutpoint languages
  by introducing a new model of Moore-Crutchfield quantum finite automata with
  a rotating tape head. CTQA presents itself as a new model of computation that
  provides a different approach to a formal study of ``classical control,
  quantum data'' schemes in quantum computing.
\end{abstract}

\begin{keyword}
  Quantum computing \sep
  Quantum finite automata \sep
  Time-dependent evolution \sep
  Bounded error \sep
  Cutpoint language
\MSC[2020] 68Q05 \sep 68Q45 \sep 81P68
\end{keyword}

\end{frontmatter}

\section{Introduction}
A well-known hardware model for the design of quantum computers is the QRAM
model proposed by Knill~\cite{KnillTR96}. The idea is that {the} quantum device {is}
attached to a classical computer controlling all operations. Several
programming languages have been designed and studied using this model
(e.g.~\cite{SelingerValironMSCS06,GreenLeFanulumsdaineRossSelingerValironPLDI13,PaykinRandZdacewicPOPL17})
where the classical part constructs the circuit and the quantum part manipulates
the quantum state. This scheme is the so-called ``classical control,
quantum-data.''

To understand the capabilities and limitations of quantum computers with
classical control, it is interesting to conceptualize a formal model of quantum
computations that incorporates in some way the idea of a classical control. The
most simple model of computation currently known is the finite-state automaton,
and it is, arguably, the best model to initiate a study of new methods of
computation.

The first model of a quantum automaton with classical control was studied by
Ambainis and Watrous~\cite{AW02} and consisted of a two-way quantum automaton
with quantum and classical inner states, with the addition that the input tape
head is also classical. Ambainis and Watrous showed that for this model of
quantum automata there exists a non-regular language that can be recognized in
expected polynomial time, whereas for the same language any two-way
probabilistic automaton requires expected exponential time. Another way to
introduce classical components in quantum computations is in the context of
quantum interactive proof systems (QIP) with quantum automata as verifiers (e.g.,
\cite{Yakaryilmaz12,Yakaryilmaz13,NishimuraYamakami15}). These works showed that
having a quantum automaton interacting with a prover that can be quantum or
classical does indeed help the automaton to recognize more languages.

In all cited works of the previous paragraph, the classical control is
implemented via discrete circuits, that is, a ``program'' decides what gates to
apply to which qubits. However, a quantum computer could do more than just apply
discrete matrices. Indeed, the Schr\"odinger equation, which is the equation
governing the time-evolution of all quantum systems, is defined over a
continuous time, whose solutions are unitary operators.

In this work we present a new type of classical control where all unitary
operators of a quantum automaton depend on time in the sense that each
unitary operator is defined by the running time of a fixed Hamiltonian for each element
of the alphabet,
and their time-evolutions can be adjusted or tuned in order to assist the
automaton in its computations. In order to control the running time
defining each unitary operator, we introduce an idea of a
\emph{scheduler} that feeds the automaton with a \emph{time-schedule} specifying
for how long the Hamiltonian is let to run. We call this model \emph{classically time-controlled quantum
automata}, or CTQA.

The automaton model used for CTQAs is the so-called ``measure-once'' or
``Moore-Crutchfield'' quantum automaton~\cite{MC00}, where only one measurement
is allowed at the end of any computation. Brodsky and Pippenger~\cite{BP02}
proved that Moore-Crutchfield quantum automata are equivalent in computational
power to classical permutation automata, which is a much weaker and restricted
model of a deterministic finite-state automaton. The class of languages
recognized by Moore-Crutchfield automata includes only regular languages, and
there are many natural regular languages that do not belong to this class. For
example, the languages $\texttt L_{ab}=\{a^nb^m~|~n,m\geq 0\}$ and $\texttt
L_1=\{x1~|~x\in \{0,1\}^*\}$ are not recognized by any permutation automaton or
Moore-Crutchfield quantum automaton.

In this work we show that even though a
CTQA uses a quite restricted model of quantum automata, when time evolutions
 can be controlled by an external classical scheduler,
more languages can be recognized. In fact, we show that non-recursive languages
are recognized by CTQAs if we allow unrestricted time-schedules (Theorem
\ref{thm:Halting}). Since arbitrary time-schedules give extreme computational
power to a quantum automaton, we study the language recognition power of CTQAs
when assisted by computationally restricted schedulers. When the scheduler is
implemented via a deterministic finite-state automaton, we show that CTQAs can
recognize all regular languages (Theorem~\ref{the:scheduler-power}) and even
non-regular languages (Theorems~\ref{the:anbn} and \ref{the:bounded-error}).
Lastly, we define the concept of cutpoint-union between languages recognized by
CTQAs with cutpoint, and, by introducing a new model of quantum automata with
``circular input tape,'' which may be interesting by itself, we show that two
passes over an input tape suffice to recognize the cutpoint-union of two
languages recognized by two CTQAs, with some restrictions
(Theorem~\ref{the:union-c}). Moreover, if we consider only CTQAs with the same
scheduler, then their corresponding class of languages is closed under
cutpoint-union (Corollary \ref{cor:union-scheduler}).

The rest of this paper is organized as follows. In Section
\ref{sec:preliminaries} we introduce the notation used throughout this paper and
briefly review some relevant results from quantum automata theory. In Section
\ref{sec:definition} we present a formal definition of CTQAs together with some
basic properties. In Section~\ref{sec:restricted} we
present our results about restricted time-schedules. Then in Section~\ref{sec:operations} we show some basic operations
between languages. Finally, in Section~\ref{sec:open} we conclude
this paper and present some open problems.

A preliminary version of this paper appeared in~\cite{DiazcaroVillagraTPNC18}.
This new paper is a revised version and contains new material. Naturally, some
text overlaps the previous version.

\section{Preliminaries}\label{sec:preliminaries}
In this section, we briefly explain the notation used in the rest of this work and review some well-known definitions and results on quantum automata.

We use $\mathbb{R}$ to denote the set of real numbers and $\mathbb{C}$ the set of complex numbers. The set of all non-negative real numbers is denoted $\mathbb{R}_0^+$. The set of natural numbers including 0 is denoted $\mathbb{N}$. The set of rational numbers is denoted $\mathbb Q$ and the set of non-negative rational numbers is denoted $\mathbb{Q}_0^+$.

Given any finite set $A$, we let $\mathbb{C}^A$ be the Hilbert space generated by the finite basis $A$. Vectors from $\mathbb{C}^A$ are denoted using the \emph{ket} notation $\ket v$. An element of the dual space of $\mathbb{C}^A$ is denoted using the \emph{bra} notation $\bra v$. The inner product between two vectors $\ket v$ and $\ket u$ is denoted $\braket vu$.
 
Let $\Sigma$ be a finite alphabet, and let $\Sigma^*$ denote the set of all
strings of finite length over $\Sigma$. A string $x\in \Sigma^*$ of length $n$
can be written as $x=x_1\cdots x_n$ where each $x_i\in \Sigma$. The length of $x$
is denoted $|x|$. A language $\texttt L$ is a subset of $\Sigma^*$. The
concatenation of two languages $\texttt L_1$ and $\texttt L_2$ is denoted
$\texttt L_1 \cdot \texttt L_2$. We also let $\texttt L^*= \cup_{k\in
  \mathbb{N}} \texttt L^k$ where $\texttt L^0=\emptyset$, $\texttt L^1=\texttt L$, and for $k>1$, $\texttt L^k$ is the language $\texttt L$ concatenated with itself $k$ times.

A \emph{quantum finite automaton} (or QFA) is a 5-tuple
$M=(Q,\Sigma,\{\xi_\sigma~|~\sigma\in \Sigma\},s,A)$ where $Q$ is a finite set
of inner states, $\xi_\sigma$ is a transition superoperator\footnote{A
superoperator or quantum operator is a positive-semidefinite operation that
maps density matrices to density matrices~\cite{SY14}.} for a symbol $\sigma
\in \Sigma$, the initial inner state is $s\in Q$, and $A\subseteq Q$ is a set
of accepting states. On input $x\in \Sigma^*$, a computation of $M$ on
$x=x_1\cdots x_n$ is given by $\rho_j=\xi_{x_j}(\rho_{j-1})$, where
$\rho_0=\ket s \bra s$ and $1\leq j \leq |x|$. The most restricted model of QFA
currently known is the so-called \emph{Moore-Crutchfield QFA} or
MCQFA~\cite{MC00}. A MCQFA is a 5-tuple $M=(Q,\Sigma,\delta,s,A)$, where all
components are defined exactly in the same way as for QFAs except that the
transition function $\delta:Q\times \Sigma \times Q \to \mathbb{C}$ defines a
collection of unitary matrices $\{U_\sigma~|~\sigma \in \Sigma\}$ where
$U_\sigma$ has $\delta(q,\sigma,p)$ in the $(p,q)$-entry and each $U_\sigma$
acts on $\mathbb{C}^Q$. Physically $M$ corresponds to a closed-system based on
pure states.\footnote{Pure states are vectors in a complex Hilbert space
normalized with respect to the $\ell_2$-norm.} For any given input $w$, the
machine $M$ is initialized in the quantum state $\ket{\psi_0}=\ket{s}$ and each
step of a computation is given by $\ket{\psi_j}=U_{w_j}\ket{\psi_{j-1}}$, where
$1\leq j\leq |w|$. The probability that $M$ accepts $w$ is
$p_{A,M}(w)=\sum_{q_j\in A} |\braket{q_j}{\psi_{|w|}}|^2$. This is equivalent
to $M$ performing a single measurement of its quantum state at the end of a
computation. The class of languages recognized by MCQFAs with bounded-error is
denoted $\mathbf{MCQFA}$. Brodsky and Pippenger~\cite{BP02} showed using a
non-constructive argument that $\mathbf{MCQFA}$ coincides with the class of
languages recognized by permutation automata; see Villagra and
Yamakami~\cite{VY15} for a constructive argument of the same result. Ambainis
and Freivalds~\cite{AF98} studied quantum automata with pure states where
measurements are allowed at each step of a computation. We denote by
$\mathbf{KWQFA}$ the class of languages recognized by quantum automata with
pure states and with many measurements allowed. Ambainis and
Freivalds~\cite{AF98} showed that $\mathbf{MCQFA} \subsetneq  \mathbf{KWQFA}$
by proving that the language $\texttt{L}_{ab}=\{a\}^*\cdot \{b\}^*\notin
\mathbf{MCQFA}$. Furthermore, the language $\texttt{L}_{ab}\notin
\mathbf{1RFA}$, where $\mathbf{1RFA}$ is the class of languages recognized by
1-way reversible automata. The class of regular languages is denoted
$\mathbf{REG}$ and it is known that $\mathbf{KWQFA} \subsetneq
\mathbf{REG}$~\cite{AF98}.

\section{Definition and Basic Properties}\label{sec:definition}
\subsection{Formal Definition of CTQAs}

A \emph{classically time-controlled quantum automaton}\! (CTQA for short) is
defined as $(Q,\Sigma,\{H_\sigma\}_{\sigma\in \Sigma},\tau,s,A,R)$, where
\begin{itemize}
\item $Q$ is a finite set of inner states,
\item $\Sigma$ is an alphabet,
\item $\{H_\sigma\}_{\sigma\in\Sigma}$ is a set of bounded-energy
  time-independent\footnote{Remark that we take bounded-energy time-independent
  Hamiltonians, so if $H$ is valid, $-H$ also is.} Hamiltonians, one for each
  symbol in the alphabet,
\item $\tau:\Sigma^*\to (\mathbb{Q}_0^+)^*$ is a function called
  \emph{time-schedule},
\item $s$ is an initial inner state,
\item $A\subseteq Q$ is the set of accepting inner states, and
\item $R\subseteq Q$ is the set of rejecting inner states.
\end{itemize}

A CTQA has a single tape that contains an  input $x$ and a time-schedule $\tau(x)=(\tau_1\cdots\tau_{|x|})$ where each $\tau_i\in \mathbb{Q}_0^+$.

Given an input $x$ of length $n$, the time-schedule maps $x$ to a sequence of
$|x|$ positive rational numbers $\tau(x)=(\tau_1\cdots \tau_n)$ where each
$\tau_i$ indicates how much time the Hamiltonian $H_{x_i}$ must be allowed to
run. For example, taking $\hbar = 1$, the Hamiltonian $H_{x_i}$ running for
$\tau_i$ time gives the unitary operator $U=e^{-iH_{x_i}
\tau_i}=\sum_{k=0}^\infty\frac{(-iH_{x_i}\tau_i)^k}{k!}$ (which is the usual
derivation from the corresponding Schr\"odinger equation).  We call the
operator $U$ the \emph{transition operator of $H_{x_i}$ for time $\tau_i$}.
We may write such an operator
indexed by $x_i$ and $\tau_i$ as $U_{(x_i,\tau_i)}$ to make it clear how it is
obtained. Notice that $U_{(x_i,0)}=I$. We make the following remark on a fundamental result about time-dependent unitary operators and Hamiltonians.
\begin{remark}
  \label{rmk:Stone}
  Stone's theorem~\cite[Theorem B]{StoneAM32} states that for every family of
  unitary matrices $\{U(t)\}_{t\in\mathbb R}$ such that $U(t)$ is linear and
  continuous on $t$, there exists a Hamiltonian $H$ such that for each $t$,
  $U(t) = e^{iHt}$.  Therefore, we may omit the Hamiltonians and
  give only the unitary matrices.
\end{remark}

A CTQA $M$ starts in the quantum state $\ket{s}$ corresponding to the initial inner state $s$. At step $i$ if the machine $M$ is in the quantum state $\ket{\psi_{i-1}}$ and reads $x_i$, then the next quantum state $\ket{\psi_{i}}$ is given by
\[
\ket{\psi_i}=U_{(x_i,\tau_{i})}\ket{\psi_{i-1}}.
\]

After scanning an entire input, the machine $M$ observes the quantum state
$\ket{\psi_{n}}$ with respect to the subspaces $span(A)=\mathbb{C}^A$,
$span(R)=\mathbb{C}^R$ and $span(Q\setminus(A\cup
R))=\mathbb{C}^{Q\setminus(A\cup R)}$. If we observe a quantum state in
$span(A)$, we say that $x$ is accepted by $M$. Similarly, if we observe a
quantum state in $span(R)$, $x$ is rejected by $M$; otherwise, $M$ answers ``I
do not know.''

Let $\Pi_A$ be a projection onto the subspace $span(A)$ and let
\[
\ket{\psi_n}=U_{(x_n,\tau_n)}\cdots U_{(x_1,\tau_1)}\ket{s}.
\]
The probability that $M$ accepts $x$ is defined as
\[
p_{M,A}(x)=\bra{\psi_n} \Pi_A \ket{\psi_n}.
\]
Similarly, if $\Pi_R$ is a projection onto the subspace $span(R)$, the probability that $M$ rejects $x$ is
\[
p_{M,R}(x)=\bra{\psi_n} \Pi_R \ket{\psi_n}.
\]
Let $\lambda\in(0,1]$. A language $\texttt{L}$ is said to be \emph{recognized} or \emph{accepted} by $M$ with cutpoint $\lambda$ if
\[
\texttt L = \{x \in\Sigma^*~|~p_{M,A}(x)\geq\lambda\}.
\]

A CTQA $A$ is \emph{time-independent} if and only if there exists $c\in
\mathbb{Q}_0^+$ for any $x\in \Sigma^*$ such that the time-schedule
$\tau(x)=(c,\dots,c)$.  We may just write $U_\sigma$ for the transition
operators in a time-independent CTQA.

The class of languages recognized by CTQA with cutpoint $\lambda$ is denoted
$\mathbf{CTQ}_\lambda$. The class of languages recognized by time-independent
CTQA with cutpoint $\lambda$ is denoted
$\mathbf{ti}\mbox{-}\mathbf{CTQ}_\lambda$. 

\subsection{Basic Properties and Comparison with Other Models}

A language $\texttt L$ is said to be recognized by $M$ with isolated cutpoint
$\lambda$ if there exists a positive real number $\alpha$ such that
$p_{M,A}(x)\geq \lambda + \alpha$ for all $x\in \texttt L$ and $P_{M,R}(x)\leq
\lambda-\alpha$ for all $x\notin \texttt L$. Language recognition with isolated
cutpoint is easily described as recognition with bounded-error. Let $\epsilon
\in [0,\frac 12)$. We say that $\texttt L$ is recognized with bounded-error by
$M$ with error bound $\epsilon$ if $p_{M,A}(x)\geq 1-\epsilon$ for all $x\in
\texttt L$ and $p_{M,R}(x)\leq \epsilon$ for all $x\notin \texttt L$. The class
of languages recognized by CTQA with bounded-error in the time-dependent and
time-independent cases are denoted $\mathbf{BCTQ}$ and
$\mathbf{ti}\mbox{-}\mathbf{BCTQ}$, respectively.

\begin{theorem}
  \label{thm:time-independent}
$\mathbf{ti}\mbox{-}\mathbf{BCTQ}=\mathbf{MCQFA}$.
\end{theorem}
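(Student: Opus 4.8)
The plan is to prove both inclusions by observing that a time-independent CTQA is, up to the bookkeeping of measurement outcomes, exactly a MCQFA whose transition unitaries happen to be presented in exponential form. First I would fix the time constant $c=1$ throughout, which is legitimate since $1\in\mathbb{Q}_0^+$, so that a time-independent CTQA applies the single unitary $U_\sigma=e^{-iH_\sigma}$ for each $\sigma\in\Sigma$, independently of position and input. This makes the two models directly comparable: both scan the input left to right applying one unitary per symbol and perform a single measurement at the end.

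For $\mathbf{t}\mbox{-}\mathbf{BCTQ}\subseteq\mathbf{MCQFA}$, I would take a time-independent CTQA $M$ recognizing $\texttt L$ with error $\epsilon$ and build a MCQFA $M'$ with the same inner states, the same initial state $s$, accepting set $A$, and transition unitaries $U_\sigma=e^{-iH_\sigma c}$ (these are genuinely unitary, so $M'$ is well defined). Since the MCQFA acceptance probability $\sum_{q\in A}|\braket{q}{\psi_n}|^2$ equals $\bra{\psi_n}\Pi_A\ket{\psi_n}=p_{M,A}(x)$, the two machines accept with identical probability; the bounded-error guarantees then transfer verbatim, because for $x\notin\texttt L$ the high reject probability $p_{M,R}(x)$ together with $p_{M,A}(x)+p_{M,R}(x)\le 1$ forces $p_{M,A}(x)\le\epsilon$.

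For the reverse inclusion $\mathbf{MCQFA}\subseteq\mathbf{t}\mbox{-}\mathbf{BCTQ}$, the key step is to realize each transition unitary $U_\sigma$ of a given MCQFA as $e^{-iH_\sigma c}$ for the single fixed time $c=1$. Here I would invoke the spectral theorem: every unitary is normal and hence diagonalizes as $U_\sigma=\sum_j e^{i\theta_j}\ketbra{v_j}{v_j}$ with real phases $\theta_j$, so setting $H_\sigma=-\sum_j\theta_j\ketbra{v_j}{v_j}$ yields a Hermitian matrix with $e^{-iH_\sigma}=U_\sigma$. Because $H_\sigma$ is a finite-dimensional Hermitian operator its spectrum is bounded, so it qualifies as a bounded-energy time-independent Hamiltonian, with no rationality constraint required. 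Choosing accepting set $A$, rejecting set $R=Q\setminus A$, and no ``I do not know'' states makes $p_{M,R}(x)=1-p_{M,A}(x)$, so the resulting CTQA reproduces exactly the MCQFA acceptance and rejection probabilities and inherits its bounded-error behaviour.

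The statement is therefore essentially an identification of the two models, and I do not expect a deep obstacle. The only points demanding care are the bookkeeping that collapses the three-outcome CTQA measurement (accept / reject / ``I do not know'') into the two-outcome MCQFA measurement --- which is harmless precisely because bounded error already concentrates the probability mass on accept-or-reject --- and the verification that one shared time constant $c$ suffices, which holds since each Hamiltonian is built per symbol and absorbs any scaling into its eigenvalues.
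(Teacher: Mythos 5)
Your proposal is correct and follows essentially the same route as the paper: both directions amount to the direct identification $U_\sigma=e^{-iH_\sigma c}$ between the transition unitaries of a MCQFA and the time-independent transition operators of a CTQA. The only difference is one of detail --- the paper's reverse inclusion is dismissed with ``any time-schedule works,'' whereas you make explicit the needed ingredient, namely the spectral-theorem construction of a bounded Hermitian $H_\sigma$ with $e^{-iH_\sigma}=U_\sigma$, which is a welcome but not conceptually different addition.
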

\begin{proof}
Let $A=(Q,\Sigma,\{H_\sigma\}_{\sigma\in\Sigma},q_0,A,R)$ be a time-independent
CTQA. Take $B=(Q,\Sigma,\delta',q_0,A,R)$ where
$\delta'(q,\sigma,p)=\bra{p}U_\sigma\ket{q}$. To see the other side of the
implication it suffices to see that $\delta'$ does not depend on time and thus any time-schedule works.
\end{proof}

The na\"ive definition of CTQA given in this section, allowing any arbitrary time-schedule, allows arbitrary power to CTQA, as exemplified by the following theorem.
\begin{theorem}\label{thm:Halting}
 If the time-schedule is not restricted, there exists a bounded-error CTQA deciding the Halting problem with $\epsilon=0$.
\end{theorem}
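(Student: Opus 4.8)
The plan is to exploit the fact that an unrestricted time-schedule $\tau$ carries no computability constraint whatsoever, so $\tau$ may itself encode the characteristic function of the Halting set. Once this is granted, the quantum automaton has to do almost nothing: it only needs to convert a single rational time value into a definite measurement outcome. In other words, all of the computational power is deliberately offloaded onto $\tau$, and the automaton is kept minimal and reversible. So the first step I would take is to fix a standard encoding of instances $(N,w)$ of the Halting problem as strings over $\Sigma=\{0,1\}$, and write $\texttt{HALT}\subseteq\Sigma^*$ for the set of encodings of pairs in which $N$ halts on $w$.

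Next I would build a two-state CTQA $M$ with $Q=\{q_0,q_1\}$, initial state $s=q_0$, accepting set $A=\{q_0\}$ and rejecting set $R=\{q_1\}$, using a single Hamiltonian $H_\sigma=\pi Y$ for every $\sigma\in\Sigma$, where $Y=\begin{pmatrix}0 & -i\\ i & 0\end{pmatrix}$ is the Pauli-$Y$ matrix. The transition operator of $H_\sigma$ for time $\tau$ is then $e^{-iH_\sigma\tau}=e^{-i\pi\tau Y}=\begin{pmatrix}\cos(\pi\tau) & -\sin(\pi\tau)\\ \sin(\pi\tau) & \cos(\pi\tau)\end{pmatrix}$, i.e.\ the planar rotation of angle $\pi\tau$. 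I would then define the time-schedule, for an input $x$ of length $n\geq 1$, by $\tau(x)=(t_x,0,\dots,0)$ (of length $n$), where $t_x=0$ if $x\in\texttt{HALT}$ and $t_x=\tfrac12$ otherwise. Since $0,\tfrac12\in\mathbb{Q}_0^+$, this is a legitimate (although non-recursive) time-schedule. Because all the $H_\sigma$ coincide and the trailing entries are $0$, these act as the identity, so the final state is just the rotation of $\ket{q_0}$ by $\pi t_x$: for $x\in\texttt{HALT}$ it is exactly $\ket{q_0}\in span(A)$, and for $x\notin\texttt{HALT}$ it is exactly $\ket{q_1}\in span(R)$. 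Hence $p_{M,A}(x)=1$ when $x\in\texttt{HALT}$ and $p_{M,A}(x)=0$ otherwise, so $M$ decides $\texttt{HALT}$ with error $\epsilon=0$; note moreover that the final state always lands exactly in $span(A)$ or $span(R)$, so the neutral ``I do not know'' outcome never occurs and $M$ genuinely \emph{decides}.

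The main obstacle here is conceptual rather than computational. The real content is the observation that ``the time-schedule is not restricted'' literally permits $\tau$ to be an arbitrary, possibly non-recursive function, so the undecidability of the Halting problem is smuggled in entirely through $\tau$ while the machine stays finite-state, reversible, and trivial. Once that point is accepted, the only thing left to verify is the elementary claim that a single \emph{rational} rotation angle can be chosen to send the initial basis state exactly onto an accepting or a rejecting basis state, which the Pauli-$Y$ evolution accomplishes with the two angles $\pi\cdot 0=0$ and $\pi\cdot\tfrac12=\tfrac{\pi}{2}$; scaling the Hamiltonian by $\pi$ is precisely what turns these rational times into the clean rotations $I$ and $\bigl(\begin{smallmatrix}0 & -1\\ 1 & 0\end{smallmatrix}\bigr)$ needed for zero error.
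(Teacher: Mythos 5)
Your proposal is correct and follows essentially the same route as the paper: a two-state CTQA whose non-recursive time-schedule encodes membership in $\texttt{HALT}$ through its first entry, which selects either the identity or a $\mathsf{Not}$-like rotation sending $\ket{q_0}$ exactly to $span(A)$ or $span(R)$. The only differences are cosmetic (Pauli-$Y$ with times $0$ and $\tfrac12$ instead of the paper's $x$-axis rotation with times $4$ and $1$).
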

\begin{proof}
Let $\texttt{HALT}\subseteq\Sigma^*$ be the language denoting the halting problem with respect to a fixed alphabet $\Sigma$, that is, a
string $x\in \texttt{HALT}$ if and only $x$ is a reasonable encoding using an
alphabet $\Sigma$ of a Turing machine $N$ and a string $w$ such that $N$ halts
on input $w$. We construct a CTQA
$M=(Q,\Sigma,\{H_\sigma\}_{\sigma\in\Sigma},\tau,s,A,R)$ recognizing
$\texttt{HALT}$.

Let $\tau$ be a time-schedule such that if an input $x$ of $M$ is the encoding
of a Turing machine $N$ and an input $w$ for $N$, then $\tau(x)=(1,0,0,\dots,0)$
if $N$ does not halt on input $w$; otherwise, $\tau(x)=(4,0,0,\dots,0)$ if $N$
halts on input $w$. Then, define $Q=\{q_0,q_1\}$, $s=q_0$, $A=\{q_0\}$, and
$R=\{q_1\}$. Each Hamiltonian $H_\sigma$ for all $\sigma$ is defined as 
$\left(\begin{smallmatrix} 0 & \frac{\pi}2 \\ \frac{\pi}2 & 0\end{smallmatrix}\right)$, so the transition operators, for all $\sigma$, are defined as
\[
  U_{(\sigma,t)}  = \cos\left(t\cdot \frac\pi 2\right) \left(\ket{q_0}\bra{q_0}
    +\ket{q_1}\bra{q_1}\right)
  +i\sin\left(t\cdot \frac\pi 2\right)\left(\ket{q_1}\bra{q_0}-\ket{q_0}\bra{q_1}\right),
\]
and in matrix form is
\begin{equation}\label{eq:rotation}
  U_{(\sigma,t)} = R_{t\pi}=\begin{pmatrix}\cos(t\frac\pi 2) & -i\sin(t\frac\pi 2)\\ i\sin(t\frac\pi 2) & \cos(t\frac\pi 2)\end{pmatrix},
\end{equation}
where $R_{t\pi}$ denotes a $t\pi$ radian rotation about the
$x$-axis of the Bloch sphere (cf.~Figure~\ref{fig:rotation}). Note that
$U_{(\sigma,0)}=I_2$, $U_{(\sigma,1)}=-i{\sf Not}$ and $U_{(\sigma,4)}=I_2$, where ${\sf Not}$ is the quantum
negation operator.

\begin{figure*}
  \centering
\definecolor{cececec}{RGB}{236,236,236}
\definecolor{c808080}{RGB}{128,128,128}
\definecolor{cff0000}{RGB}{255,0,0}
\def \globalscale {2}
\begin{tikzpicture}
  \node (a) at (0,0) {
    \begin{tikzpicture}[y=0.80pt, x=0.80pt, yscale=-\globalscale, xscale=\globalscale, inner sep=0pt, outer sep=0pt]
      \begin{scope}[shift={(4.67722,-4.67722)}]
        \begin{scope}[shift={(-0.13363,-5.47903)}]
          \path[draw=cececec,line cap=butt,miter limit=4.00,line width=0.449pt] (72.7034,136.6464)arc(0.000:89.928:31.911505 and 31.916)arc(89.929:179.857:31.911505 and 31.916)arc(179.857:269.786:31.911505 and 31.916)arc(269.786:359.714:31.911505 and 31.916);
          \path[cm={{0.65869,-0.75241,0.72098,0.69296,(0.0,0.0)}},draw=c808080,line width=0.171pt] (-43.3994,120.9599)arc(-0.000:89.928:27.070024 and 31.912)arc(89.928:179.857:27.070024 and 31.912)arc(179.857:269.785:27.070024 and 31.912)arc(269.785:359.714:27.070024 and 31.912);
          \path[cm={{0.99976,-0.02209,0.01933,0.99981,(0.0,0.0)}},draw=c808080,line width=0.209pt] (38.2079,137.6203) ellipse (0.8939cm and 0.3206cm);
          \path[draw=black,line width=0.331pt] (70.0756,140.5505) -- (11.8049,132.8179) .. controls (70.0756,140.5505) and (11.8049,132.8179) .. (70.0756,140.5505) -- cycle;
          \path[draw=black,line width=0.186pt] (53.8025,126.0710) -- (27.9776,147.2427) .. controls (53.8025,126.0710) and (27.9776,147.2427) .. (53.8025,126.0710) -- cycle;
          \path[draw=black,line width=0.331pt] (41.4368,166.7442) -- (40.1970,107.2405) .. controls (41.4368,166.7442) and (40.1970,107.2405) .. (41.4368,166.7442) -- cycle;
          \path[cm={{0.9915,0.13014,-0.13262,0.99117,(0.0,0.0)}},draw=c808080,line width=0.210pt] (58.7476,130.3788) ellipse (0.3526cm and 0.8927cm);
          \path[draw=cff0000,line cap=butt,miter limit=4.00,line width=0.525pt] (40.7495,136.5888) -- (24.9199,114.0576) .. controls (40.7495,136.5888) and (24.9199,114.0576) .. (40.7495,136.5888) -- cycle;
          \path[line width=0.212pt] (22,110) node {$\ket\psi$};
          \path[line width=0.212pt] (24,150) node {$x$};
          \path[line width=0.212pt] (55,124) node {$-x$};
          \path[line width=0.212pt] (5,132) node {$-y$};
          \path[line width=0.212pt] (75,142) node {$y$};
          \path[line width=0.212pt] (42,100) node {$z=\ket 0$};
          \path[line width=0.212pt] (41,172) node {$-z=\ket 1$};
          \path[draw=cff0000,fill=cff0000,line cap=butt,miter limit=4.00,line width=0.525pt] (24.8058,113.9069) circle (0.0141cm);
        \end{scope}
      \end{scope}
    \end{tikzpicture}
  };
  \node (b) at (7,0) {
    \begin{tikzpicture}[y=0.80pt, x=0.80pt,yscale=-\globalscale, xscale=\globalscale,  inner sep=0pt, outer sep=0pt]
      \begin{scope}[shift={(55.75149,15.49702)}]
        \path[draw=cececec,line cap=butt,miter limit=4.00,line width=0.449pt] (106.1838,112.2589)arc(0.000:89.929:31.911505 and 31.916)arc(89.929:179.857:31.911505 and 31.916)arc(179.857:269.786:31.911505 and 31.916)arc(269.786:359.714:31.911505 and 31.916);
        \path[cm={{0.65869,-0.75241,0.72098,0.69296,(0.0,0.0)}},draw=c808080,line width=0.171pt] (-2.5719,130.0969)arc(-0.000:89.928:27.070023 and 31.912)arc(89.928:179.857:27.070023 and 31.912)arc(179.857:269.785:27.070023 and 31.912)arc(269.785:359.714:27.070023 and 31.912);
        \path[cm={{0.99976,-0.02209,0.01933,0.99981,(0.0,0.0)}},draw=c808080,line width=0.209pt] (72.1537,113.9783) ellipse (0.8939cm and 0.3206cm);
        \path[draw=black,line width=0.331pt] (103.5560,116.1630) -- (45.2854,108.4304) .. controls (103.5560,116.1630) and (45.2854,108.4304) .. (103.5560,116.1630) -- cycle;
        \path[draw=black,line width=0.186pt] (87.2830,101.6835) -- (61.4580,122.8552) .. controls (87.2830,101.6835) and (61.4580,122.8552) .. (87.2830,101.6835) -- cycle;
        \path[draw=black,line width=0.331pt] (74.9173,142.3567) -- (73.6775,82.8530) .. controls (74.9173,142.3567) and (73.6775,82.8530) .. (74.9173,142.3567) -- cycle;
        \path[cm={{0.9915,0.13014,-0.13262,0.99117,(0.0,0.0)}},draw=c808080,line width=0.210pt] (88.6983,101.8416) ellipse (0.3526cm and 0.8927cm);
        \path[draw=cff0000,line cap=butt,miter limit=4.00,line width=0.464pt] (74.2181,112.2061) -- (44.5689,102.7999) .. controls (74.2181,112.2061) and (44.5689,102.7999) .. (74.2181,112.2061) -- cycle;
        \path[draw=cff0000,fill=cff0000,line cap=butt,miter limit=4.00,line width=0.525pt] (44.5492,102.7604) circle (0.0141cm);
        \path[line width=0.212pt] (40,101) node {$\ket\psi$};
          \path[line width=0.212pt] (58,125) node {$x$};
          \path[line width=0.212pt] (89,99) node {$-x$};
          \path[line width=0.212pt] (39,107) node {$-y$};
          \path[line width=0.212pt] (109,117) node {$y$};
        \path[line width=0.212pt] (75,75) node {$z=\ket 0$};
        \path[line width=0.212pt] (74,147) node {$-z=\ket 1$};
      \end{scope}
    \end{tikzpicture}
  };
  \draw[->] (a.east) -- (b.west) node[above,midway] {$R_{\frac\pi 4}$};
\end{tikzpicture}
\caption{$U_{(\sigma,\frac 14)}=R_{\frac \pi 4}$ rotation.}
\label{fig:rotation}
\end{figure*}
  
Therefore, if the input represents a halting Turing machine, the computation will be $I_2\ket{q_0}=\ket{q_0}$ and the accepting state $\ket{q_0}$ is observed with probability 1. If the input is a non-halting Turing machine, then the computation is $-i{\sf Not}\ket{q_0}=-i\ket{q_1}$ and the rejecting state $\ket{q_1}$ is observed with probability 1.
\end{proof}

The previous theorem shows that the expressive power of a time-schedule can be easily passed to a CTQA. In order to unearth the capabilities of CTQAs we will introduce in the following section a machine called \emph{scheduler} that takes care of computing a time-schedule. As we will see, a scheduler, besides its advice-like mechanism, is endowed with oracle access to a decidable language which will be used to construct a time-schedule.

To end this section, we make two comments on our definition of time controlled
quantum computation. First, a time-schedule, as defined, is reminiscent of
advised computation. Intuitively, since a unitary operation $U_{(\sigma,t)}$ is
applied only at discrete times, a time-schedule is similar to a unitary
transition operation $U_{(\sigma,\gamma)}$ of advised quantum automata, where
$\sigma$ is an input symbol and $\gamma$ is an input advice symbol---see
Yamakami's definition of advice~\cite{Yam14}. However, given a finite set of
advised unitary transition matrices $\{U_{(\sigma,\gamma_i)}\}_{i\geq 0}$ for
each input symbol $\sigma$, in order to simulate advice using time-schedules we
need to define a unitary operation $U_{(\sigma,t)}$ that at time $t_1$ behaves
like $U_{(\sigma,\gamma_1)}$, at time $t_2$ behaves like
$U_{(\sigma,\gamma_2)}$, and so on. This process requires a construction of a
Hamiltonian $H_\sigma$ that is an interpolation of the Hamiltonians
corresponding to each advised unitary transition operation
$U_{(\sigma,\gamma_1)}, U_{(\sigma,\gamma_2)}$, etc. 
Secondly, the automata with classical and quantum states described by Ambainis and Watrous~\cite{AW02} are also intuitively similar to our model. In particular, the automaton by Qiu \emph{et al.}~\cite{QLM15} is very close to our model. In their work, the authors construct an MCQFA with classical and quantum states, where only a single measurement, dependent on the last visited classical inner state of the automaton, is applied at the end of a computation.
In contrast to the advice case, however, our model
does not have unitary operations that are conditioned on internal or external
controls. 
This is mostly due to the fact that a CTQA cannot condition which unitary operator to use based on time. Therefore, to simulate any kind of conditioned unitary operators, we require some form of interpolation of Hamiltonians, which is likely not a trivial problem.

\section{Language Recognition with Restricted Time-Schedules}\label{sec:restricted}
\subsection{The Scheduler and its Role}

A \emph{scheduler} $S$ is defined as a pair $(D,W)$ where $D$ is a multitape 
Turing machine that halts on all inputs called a \emph{decider} and $W$ is a
multi-valued function called a \emph{writer}. Besides the decider and writer,
the scheduler $S$ includes the capability of counting the size on an input. On
input $x$ a scheduler $S$ works as depicted in Figure~\ref{fig:scheduler}:
First $S$ runs $D$ on input $x$ and outputs a bit $b$ where $b=1$ if $x$ is
accepted by $D$ or $b=0$ if $x$ is rejected by $D$. Then $S$ runs the writer $W$
on input $b$ and $0^n$, where $n=|x|$. For some constant positive integer $k$, the writer $W$
is defined using two lists of functions $\mathcal{F}=(f_1,\dots,f_k)$ and
$\mathcal{G}=(g_1,\dots,g_k)$ where each $f_i:\{0\}^*\to
\mathbb{Q}_0^+$ and $g_i:\{0\}^*\to \mathbb{Q}_0^+$. The writer $W$ on input
$b$ and $0^n$ generates as an output a time-schedule $(f_{1}(0^n)\cdots
f_{n}(0^n))$ if $b=1$ or $(g_{1}(0^n)\cdots g_{n}(0^n))$ if $b=0$.
A mechanism similar to the decider $D$ appears in quantum automata with control languages~\cite{MP06}.

\begin{figure*}
\centering
\begin{tikzpicture}[node distance=.5cm,every node/.style={transform shape}, scale=0.7]
  \node[rectangle,draw,thick,minimum width=100pt,minimum height=20pt] (W) at (0,0) {Writer};
  \node (dummy) [left=4cm of W] {};
  \node[rectangle,draw,thick,minimum width=100pt,minimum height=20pt] (D) [above=of dummy] {Decider};
  \node[rectangle,draw,thick,minimum width=100pt,minimum height=20pt] (S) [below=of dummy] {Size};
  \draw[-latex,shorten >=2pt] (D.east) -- (W.west) node[midway,above,sloped]{$1/0$};
  \draw[-latex,shorten >=2pt] (S.east) -- (W.west) node[midway,below,sloped] {$|x|$};
  \node (e) [left=5cm of dummy] {};
  \node (b) [left=2.5cm of dummy] {};
  \draw[-] (e) -- (b) node[above,midway] {x};
  \draw[-latex,shorten >=2pt] (b.west) -- (D.west) node[above] {};
  \draw[-latex,shorten >=2pt] (b.west) -- (S.west) node[above] {};
  \node (o) [right=1.5cm of W] {};
  \draw[-latex] (W.east) -- (o) node[right] {output};
  \node[draw,rectangle,minimum width=12cm,minimum height=4cm,very thick,label=Scheduler] at (-3.5,0) {};
\end{tikzpicture}
\caption{Scheduler diagram}
\label{fig:scheduler}
\end{figure*}

Let $\mathcal{C}$ be a complexity class. We denote by
$\mathcal{C}\mbox{-}\mathbf{CTQ}_\lambda$ the class of languages recognized by
CTQA with cutpoint $\lambda$, where the computational power of the decider in
the scheduler is restricted to $\mathcal{C}$. That is, a language \texttt L is
in $\mathcal{C}\mbox{-}\mathbf{CTQ}_\lambda$ if there exists a set $\texttt A\in
\mathcal{C}$ such that the decider $D$ decides $\texttt A$, and a CTQA with
scheduler $(D,W)$ recognizes \texttt L with cutpoint $\lambda$. In particular,
$\mathbf{REG}\mbox{-}\mathbf{CTQ}_\lambda$ is the class of languages recognized
by CTQAs with cutpoint $\lambda$ where the decider $D$ is a finite-state
automaton. When a CTQA is bounded-error, we write
$\mathcal{C}\mbox{-}\mathbf{BCTQ}_\epsilon$, where $\epsilon$ is the error
bound.

Note that when we restrict a scheduler, we do not impose any computational
restriction on the writer. The only restriction is that any time-schedule must
contain exclusively rational numbers. This is done deliberately for the
purposes of this work to imitate practical times that can be used in a lab to
run an experiment. Furthermore, we can also use a writer to pass on some
external information to a CTQA similar to an advice-like mechanism. We explore
the power of a writer against advice later in Theorem~\ref{the:a-simb-b}.

It is clear that a CTQA has, at least, as much computational power as the
decider in its scheduler, as stated in Theorem~\ref{the:scheduler-power} below.
Later we will show that even if a scheduler is computationally restricted, a
CTQA can recognize more languages than what is allowed by its scheduler.
\begin{theorem}\label{the:scheduler-power} For any (nonempty) complexity class
$\mathcal{C}$, $\mathcal{C} \subseteq \mathcal{C}\mbox{-}\mathbf{BCTQ}_0$.
\end{theorem}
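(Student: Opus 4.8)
The statement $\mathbf{C} \subseteq \mathbf{C}\text{-}\mathbf{BCTQ}_0$ asserts that any language decidable within complexity class $\mathbf{C}$ is recognizable with zero error by a CTQA whose scheduler's decider is restricted to $\mathbf{C}$. The natural approach is to exploit the fact that the decider already does all the hard work: given $\texttt{L} \in \mathbf{C}$, we simply use a decider $D$ that decides $\texttt{L}$ itself, and then arrange for the CTQA to mechanically translate the decider's bit $b \in \{0,1\}$ into an accept-or-reject answer. This essentially reuses the construction idea from the Halting problem theorem (Theorem~\ref{thm:Halting}), but now with a \emph{legitimately restricted} decider instead of an all-powerful time-schedule.

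**Key steps.** First, fix $\texttt{L} \in \mathbf{C}$ and let $D$ be a Turing machine deciding $\texttt{L}$, placing $D$ as the decider of the scheduler $S = (D,W)$. Second, design the writer $W$ so that the output bit $b$ of $D$ controls which family of time values is emitted: using the two function sets $\mathcal{F}$ and $\mathcal{G}$, I would set things up so that when $b=1$ (i.e. $x \in \texttt{L}$) the writer emits a schedule beginning with a value that produces the identity on the single qubit, and when $b=0$ it emits a schedule beginning with a value that produces (up to a global phase) the $\mathsf{Not}$ operator. Concretely, reusing the one-qubit construction of Theorem~\ref{thm:Halting} with $Q = \{q_0,q_1\}$, $s = q_0$, $A = \{q_0\}$, $R = \{q_1\}$, and Hamiltonians $H_\sigma = \mathsf{Not}\,\tfrac{\pi}{2}$ for every $\sigma$, one takes $\tau(x) = (4,0,\dots,0)$ when $b=1$ and $\tau(x) = (1,0,\dots,0)$ when $b=0$. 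Third, verify the probabilities: since $U_{(\sigma,4)} = I_2$ and $U_{(\sigma,0)} = I_2$, an accepted string leaves the state at $\ket{q_0}$ and is observed in $\mathrm{span}(A)$ with probability $1$, giving $p_{M,A}(x) = 1 \geq 1-\epsilon$; since $U_{(\sigma,1)} = i\,\mathsf{Not}$, a rejected string is driven to $i\ket{q_1}$ and observed in $\mathrm{span}(R)$ with probability $1$, giving $p_{M,R}(x) = 1$, so $p_{M,A}(x) = 0 \leq \epsilon$. Both bounds hold with $\epsilon = 0$, establishing recognition with error bound $0$.

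**The main obstacle.** The only genuine subtlety is bookkeeping about the writer's interface: $W$ receives $b$ together with $0^n$ and must emit a schedule of length exactly $n = |x|$ drawn from the indexed functions $f_{i_j}, g_{i_j}$. I must check that choosing the constant functions $f_{i_1} \equiv 4$, $g_{i_1} \equiv 1$, and $f_{i_j} \equiv g_{i_j} \equiv 0$ for $j \geq 2$ is legitimate within the writer's formal definition, and that the trailing zeros correctly contribute only identity operators regardless of the input length. Everything else is a direct transcription of the single-qubit rotation argument already validated in Theorem~\ref{thm:Halting}; the crucial conceptual point is merely that here the time-schedule's discriminating power comes lawfully from a decider of complexity $\mathbf{C}$ rather than from an unrestricted oracle, so the containment respects the complexity restriction.
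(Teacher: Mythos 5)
Your proposal is correct and follows essentially the same route as the paper: reuse the single-qubit CTQA from Theorem~\ref{thm:Halting}, put a decider for $\texttt{L}\in\mathbf{C}$ into the scheduler, and have the writer output $(4,0,\dots,0)$ or $(1,0,\dots,0)$ according to the decider's bit, yielding acceptance or rejection with probability $1$. The paper's proof is a three-line version of exactly this, and your extra bookkeeping about realizing the schedule via constant functions in $\mathcal{F}$ and $\mathcal{G}$ is a harmless (indeed slightly more careful) elaboration.
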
 \begin{proof} We can consider the same CTQA from the proof of
  Theorem~\ref{thm:Halting}. Take a decider $D$ recognizing a language $\texttt
  L\in\mathcal{C}$. Then, we consider the scheduler $S=(D,W)$ where
  $W(n,0)=(1,0,\dots,0)$ and $W(n,1)=(4,0,\dots,0)$.
  Thus, if the decider recognise the word, the CTQA will run $U_{(\sigma,4)}=I_2$, ending in the accepting state $q_0$, while if it does not, the CTQA will run $U_{(\sigma,1)} = -i\mathsf{Not}$, ending in the rejecting state $q_1$.
\end{proof}

Let $\texttt{L}_{ab}^\lambda=\{ a^n b^m\>| \> \cos^2(\frac{\pi(n-m)}{2(n+m)})\geq\lambda\}$. Using a pumping argument, it is easy to prove that $\texttt{L}_{ab}^\lambda$ for any positive $\lambda$ is not a regular language; see~\ref{ap:non-regular} for a proof. Theorem~\ref{the:anbn} below shows that if we restrict the class of the decider to the class of languages recognized by Kondacs-Watrous quantum automata, there exists a CTQA recognizing $\texttt{L}_{ab}^\lambda$ with cutpoint $\lambda$, and thus,  $\mathbf{KWQFA}\mbox{-}\mathbf{CTQ}_\lambda \nsubseteq \mathbf{REG}$.

\begin{theorem}\label{the:anbn}
There exists a CTQA with a decider from the class of languages recognized by Kondacs-Watrous quantum automata that accepts any $x\in \texttt{L}_{ab}^\lambda$ with probability at least $\lambda$, and rejects any $x\notin \texttt{L}_{ab}^\lambda$ with probability at least $1-\lambda$. 
\end{theorem}
\begin{proof}
Let $M=(Q,\Sigma,\{H_\sigma\}_{\sigma\in\Sigma},\tau,s,A,R)$ where
$Q=\{q_0,q_1\}$, $\Sigma=\{a,b\}$, $s=q_0$, $A=\{q_0\}$, $R=\{q_1\}$,
$H_a=\left(\begin{smallmatrix} 0 & \frac{\pi}2 \\ \frac{\pi}2 & 0\end{smallmatrix}\right)$, and $H_b=-H_a$. Thus, $U_{(a,t)}$ is defined as in
Eq.~\eqref{eq:rotation}, and $U_{(b,t)}$ is its opposite:
\[
  U_{(a,t)} =
   \cos\left(t\frac\pi 2 \right)\left(\ket{q_0}\bra{q_0}+\ket{q_1}\bra{q_1}\right)
  -i\sin \left(t\frac\pi 2 \right)\left(\ket{q_0}\bra{q_1}+\ket{q_1}\bra{q_0}\right),
\]
\[
  U_{(b,t)}= \cos\left(-t\frac\pi 2 \right)\left(\ket{q_0}\bra{q_0}+\ket{q_1}\bra{q_1}\right)
  -i\sin \left(-t\frac\pi 2 \right)\left(\ket{q_0}\bra{q_1}+\ket{q_1}\bra{q_0}\right).
\]
The intuition is that $U_{(a,0)}$ is the identity operator, $U_{(a,1)}$ is the
${\sf Not}$ operator, whereas $U_{(a,t)}$ is a unitary operation ``between''
the identity and the ${\sf Not}$ operators for $t\in(0,1)$. On the other hand,
$U_{(b,t)}$ is a rotation in the opposite direction.

We define the scheduler $S$ computing $\tau$ as $S=(D,W)$ where
\begin{itemize}
\item $D$ is a finite state decider recognizing the regular language $\texttt{L}_{ab}=\{a\}^*\cdot \{b\}^*$ such that $D$ outputs $b=1$ for all strings in  $\texttt L_{ab}$  and $b=0$ otherwise, and
  \item $W$ is a writer given by
  \[
  W(n+m,b)=\left\lbrace
  \begin{array}{ll}
  (\frac{1}{n+m},\frac{1}{n+m},\dots,\frac{1}{n+m}) & \textrm{if }b=1\\
  \\
  (1,0,\dots,0) & \textrm{if }b=0.
  \end{array}
  \right.
  \]
\end{itemize}

Suppose $x\in \texttt L_{ab}^\lambda$. If $x$ is the empty string, $M$ stays in the inner state $q_0$, which is an accepting inner state. Otherwise, let $x=a^nb^m$. The scheduler runs $D$ on $x$ which this time accepts $x$, and the writer outputs $(\frac{1}{n+m},\frac{1}{n+m},\dots,\frac{1}{n+m})$. The unitary operators that $M$ uses are $U_{(a,{\frac{1}{n+m}})}^n=U_{(a,{\frac n{n+m}})}$ and $U_{(b,{\frac{1}{n+m}})}^m=U_{(a,{-\frac m{n+m}})}$. After scanning the entire input 
the quantum state of $M$ is
\begin{equation}
U_{(a,-\frac m{n+m})}U_{(a,\frac n{n+m})} \ket{q_0}
=
\cos\left(\frac{\pi(n-m)}{2(n+m)}\right)\ket{q_0}
+
i\sin\left(\frac{\pi(n-m)}{2(n+m)}\right)\ket{q_1}.
  \label{eq:state-ab}
\end{equation}
Hence, the probability of accepting the string $a^nb^m$ is
$\cos^2(\frac{\pi(n-m)}{2(n+m)})$, which is greater or equal than $\lambda$
since $x$ is a member of $L_{ab}^\lambda$.

Now suppose $x\notin \texttt L_{ab}^\lambda$. The scheduler runs $D$ on $x$, which can reject or accept $x$. If $D$ rejects, then writer outputs $(1,0,\dots,0)$ as a time-schedule for $M$. The first unitary operator that is applied is either $U_{(a,1)}$ or $U_{(b,1)}$ which is the ${\sf Not}$ operator, and for each remaining 0 in the time-schedule all unitary operators are the identity. The machine $M$ will apply ${\sf Not}$ on $\ket 0$, obtaining $\ket 1$ and then it stays in $\ket 1$. After scanning the entire input, $M$ measures its quantum state and observes $\ket 1$, thus, rejecting $x$. If $D$ accepts, then the CTQA enters the same quantum state in Eq.~\eqref{eq:state-ab}. Hence, the probability of rejecting the string $a^nb^m$ is $\sin^2(\frac{\pi(n-m)}{2(n+m)})$, which is greater or equal than $1-\lambda$
since $x$ is not a member of $L_{ab}^\lambda$.
\end{proof}

\subsection{Language Recognition Power with Restricted Schedulers}

The following theorem shows that there exists a promise language\footnote{A promise language over an alphabet $\Sigma$ is a pair $\texttt L=(\texttt L_{yes},\texttt L_{no})$ such that $\texttt L_{yes},\texttt L_{no}\subseteq \Sigma^*$ and $\texttt L_{yes}\cap \texttt L_{no}=\emptyset$. We say that a promise language $\texttt L$ is \emph{solved} by a machine $M$ with bounded-error $\epsilon$ if for all $x\in \texttt L_{yes}$ we have that $M$ accepts $x$ with probability at least $1-\epsilon$ and for all $x\in \texttt L_{no}$ we have that $M$ rejects $x$ with probability at least $1-\epsilon$ \cite{AY12}.} that can be solved by CTQAs with bounded-error, even in the case of a restricted scheduler. The only caveat is that our CTQA requires endmarkers for the beginning and end of an input string. Recall that $\mathrm{KWQFA} \subsetneq \mathrm{REG}$~\cite{AF98}. 
Let Odd$^{\geq 1}\subseteq\mathbb N$ be the set of odd natural numbers and let $\texttt{L}_{ab,\delta}=(\texttt{L}_{ab,\delta}^{\text{yes}},\texttt{L}_{ab,\delta}^{\text{no}})$, with $0<\delta< \frac{\sqrt{3}}{2}-\frac 12\approx 0.366\dots$, be a promise language over the alphabet $\Sigma=\{a,b\}$ where
\begin{align*}
  &\texttt{L}_{ab,\delta}^{\text{yes}}
  =\bigg\{a^nb^m\>:\>
       \left|\frac{n}{n+m}-\frac 1 2\right|<\delta,\left|\frac{m}{n+m}-\frac 1 2\right|<\delta,
     n,m\in\textrm{Odd}^{\geq 1}\bigg\},
  \\
  & \texttt{L}_{ab,\delta}^{\text{no}}
  =
  \left\{a^nb^m\>:\> \frac{n}{n+m}<\delta^2,\> n,m\in\textrm{Odd}^{\geq 1}\right\}
  \cup (\Sigma^*-a^*b^*) .
\end{align*}
Notice that $\texttt{L}_{ab,\delta}^{\text{yes}}\cap\texttt{L}_{ab,\delta}^{\text{no}}=\emptyset$ since
for any $a^nb^m\in\texttt{L}_{ab,\delta}^{\text{yes}}$ we have $\frac 12-\delta <\frac n{n+m} < \frac 12+\delta$ while for any $a^nb^m\in\texttt{L}_{ab,\delta}^{\text{no}}$ we have $\frac n{n+m} < \delta^2<\frac 12-\delta$.

The promise language $\texttt{L}_{ab,\delta}$ distinguishes strings that have
almost the same number of $a$'s and $b$'s from strings that are not of the form
$a^*b^*$ or the number of $a$'s is much larger than the number of $b$'s. We can
use a pumping argument to show that there is no deterministic finite-state
automaton that can distinguish $\texttt{L}_{ab,\delta}^{\text{yes}}$ from
$\texttt{L}_{ab,\delta}^{\text{no}}$. Take the string $a^pb^p \in
\texttt{L}_{ab,\delta}^{\text{yes}}$, where $p$ is at least the number of
states of any deterministic finite automaton that solves
$\texttt{L}_{ab,\delta}$. Thus, we guarantee that there is a state that is
repeating while scanning $a$'s. Take those substrings of $a$'s that start and
end in the same state and pump it enough times to make a new string
$a^{h+k}b^{p}$, where $h<p$ and $k$ is sufficiently large. This way we will
have that $a^{h+k}b^p\in \texttt{L}_{ab,\delta}^{\text{no}}$ and the automaton
will accept the string when it has to reject it.

\begin{theorem}\label{the:bounded-error}
There exists a CTQA with right and left endmarkers and a decider from the class of languages recognized by Kondacs-Watrous quantum automata that solves the promise language $\texttt{L}_{ab,\delta}$, for some $\delta$, with bounded-error $\epsilon$.
\end{theorem}
\begin{proof}
We prove that there exists $\epsilon$ such that $\texttt{L}_{ab,\delta}$ is solved by some CTQA with error upper-bounded by $\epsilon$ for some appropriately chosen $\delta$.

Let $M=(Q,\Sigma,\{H_\sigma\}_{\sigma\in\hat\Sigma},\tau,s,A,R)$ where
$Q=\{q_0,q_1,q_2,q_{acc},q_{rej}\}$, $\Sigma=\{a,b\}$, $\hat\Sigma=\Sigma\cup\{\vdash,\dashv\}$, $s=q_0$, $A=\{q_{acc}\}$, $R=\{q_{rej}\}$ and $\vdash$ is a left endmarker and $\dashv$ is a right endmarker. 
The Hamiltonians $H_a$, $H_b$, $H_\vdash$, and $H_\dashv$ are chosen such that they produce the following families of unitary operators (cf.~Remark~\ref{rmk:Stone}):
  \begin{align*}
  V_{(a,t)}
  &= \sin \left(t\pi \right)\left(\ket{q_0}\bra{q_0}+\ket{q_1}\bra{q_1}\right)
  +\cos\left(t\pi \right)\ket{q_0}\bra{q_1}-\cos\left(t\pi \right)\ket{q_1}\bra{q_0}\\
  &+\ket{q_2}\bra{q_2}+\ket{q_{acc}}\bra{q_{acc}}+\ket{q_{rej}}\bra{q_{rej}},
  \\
  V_{(b,t)}&= \sin(t\pi)(\ket{q_0}\bra{q_0}+\ket{q_2}\bra{q_2})
  +\cos(t\pi)\ket{q_0}\bra{q_2}-\cos(t\pi)\ket{q_2}\bra{q_0}\\
  &+\ket{q_1}\bra{q_1}+\ket{q_{acc}}\bra{q_{acc}}+\ket{q_{rej}}\bra{q_{rej}}.
  \end{align*}

For the left endmarker we have $V_{(\vdash,t)}=R_{t\pi}\otimes I$, where $R_{t\pi}$ acts on $\{q_0,q_{rej}\}$ and $I$ is the identity matrix acting on $\{q_1,q_2, q_{acc}\}$. For the right endmarker we have that $V_{(\dashv,t)}$ is any permutation matrix where 
\begin{align*}
  V_{(\dashv,1)}\ket{q_0} & =\ket{q_{acc}}, \\
  V_{(\dashv,1)}\ket{q_1} & =\ket{q_{rej}}, \\
  V_{(\dashv,1)}\ket{q_2} & =\ket{q_2}.
\end{align*}

We define the scheduler $S$ computing $\tau$ as $S=(D,W)$ where $D$ is a finite-state decider that recognizes the language $a^*b^*$ and $W$ is a writer given by
\begin{align*}
  W(n+m,0)& =(1,0,\dots,0) \text{ and}\\
  W(n+m,1) &=(0,\overbrace{\frac{1}{n+m},\dots,\frac{1}{n+m}}^{n\text{ times}},\overbrace{\frac{1}{n+m},\dots,\frac{1}{n+m}}^{m\text{ times}},1).
\end{align*}
Now we prove the correctness of our construction by estimating the accepting and rejecting probabilities.

Suppose that $x\in \texttt{L}_{ab,\delta}^{\text{yes}}$. In this case, the
decider $D$ accepts and the writer $W$ outputs
$(0,\frac{1}{n+m},\dots,\frac{1}{n+m},\frac{1}{n+m},\dots,\frac{1}{n+m},1)$.
The automaton starts in $\ket{q_0}$ and the operator $V_{(\vdash,t)}$ acts as
the identity on $\ket{q_0}$. Now $M$ scans $a^nb^m$. First, note that, by
straightforward trigonometric properties, it is easy to check that
$V_{(a,t)}^n=\pm V_{(a,nt)}$ and $V_{(b,t)}^m=\pm V_{(b,mt)}$, where the signs are
positive for $n=1,5,9,\dots$ and negative for $n=3,7,11,\dots$.
Also, we have that
\begin{align*}
     V_{(a,\frac{1}{n+m})}^n\ket{q_0}=\pm\sin\left(\frac{n\pi}{n+m}\right)\ket{q_0}\mp\cos\left(\frac{n\pi}{n+m}\right)\ket{q_1}.
\end{align*}

After scanning an entire input $a^nb^m$ and the right endmarker, the CTQA $M$ is in the quantum state
\begin{align}
  V_{(\dashv,1)}V_{(b,\frac{1}{n+m})}^m V_{(a,\frac{1}{n+m})}^n\ket{q_{0}}
&=\left(\pm\sin\left(\frac{m\pi}{n+m}\right)\right).\left(\pm\sin\left(\frac{n\pi}{n+m}\right)\ket{q_{acc}})\right)\nonumber\\
&\quad+\left(\mp\cos\left(\frac{m\pi}{n+m}\right)\right).\left(\pm\sin\left(\frac{n\pi}{n+m}\right)\ket{q_2}\right)\nonumber\\
&\quad+\left(\mp\cos\left(\frac{n\pi}{n+m}\right)\ket{q_{rej}}\right).
    \label{eq:state-bounded}
\end{align}

The probability that $M$ accepts input $x=a^n b^m$ in $\texttt{L}_{ab,\delta}^{\text{yes}}$ is given by
\begin{align*}
p_{M,A}(x) &= \sin^2 \left( \frac{m\pi}{n+m} \right)\cdot \sin^2 \left( \frac{n\pi}{n+m} \right)\\
     &\geq \sin^2 \left(\pi\cdot (\frac{1}{2}+\delta) \right)\cdot \sin^2 \left(\pi\cdot (\frac{1}{2}+\delta) \right)\nonumber
     = \cos^4(\pi \delta).
\end{align*}

Now suppose that $x\in \texttt{L}_{ab,\delta}^{\text{no}}$. If $x\in \Sigma^*-a^*b^*$, then the writer $W$ outputs the time schedule $(1,0,\dots,0)$ and $M$ immediately enters the rejecting state $\ket{q_{rej}}$ and it rejects with probability 1. If $x$ is of the form $a^*b^*$, then the writer outputs the time schedule $(0,\frac{1}{n+m},\dots,\frac{1}{n+m},\frac{1}{n+m},\dots,\frac{1}{n+m},1)$ and $M$ arrives at the same quantum state of Eq.~\eqref{eq:state-bounded}. Thus, the probability that $M$ rejects an input $x= a^n b^m$ is given by 
\[
  p_{M,R} (x)  =\cos^2\left( \frac{n\pi}{n+m} \right)
  \]
  Since $\frac{n}{n+m}<\delta^2<\delta\leq \frac{\sqrt 3}2-\frac 12$, the function $\cos^2\left(\frac{n\pi}{n+m}\right)$ is  decreasing, hence
  \[
  p_{M,R} (x)  =\cos^2\left( \frac{n\pi}{n+m} \right)
  \geq \cos^2\left(\delta^2\pi \right).
\]

For $\delta=\frac{1}{6}$ we have that $p_{M,A}(x)\geq \cos^4(\frac{1 \pi}{36}) > 0.56$ when $x\in
\texttt{L}_{ab,\delta}^{\text{yes}}$ and $p_{M,R}(x)\geq \cos^2(\frac{1\pi}{36}) > 0.93$ when $x\in
\texttt{L}_{ab,\delta}^{\text{no}}$. Hence, we obtain an error bound of
$\epsilon\leq 0.44$. 
\end{proof}

Let $\texttt L_{1}=\{w\cdot 1~|~w \in \{0,1\}^*\}$. The language $\texttt L_{1}$
is a regular language that is not recognized by any KWQFA~\cite{AF98}. This
language can be recognized by a $\mathbf{CTQ}_\lambda$ with a decider restricted
to a constant function. Let $\mathbf{\Sigma^*}\mbox{-}\mathbf{CTQ}_\lambda$ be
the class of languages recognized by CTQAs with cutpoint $\lambda$ where the
decider accepts any string over the alphabet $\Sigma$. Similarly, we define $\mathbf{\Sigma^*}\mbox{-}\mathbf{BCTQ}_\epsilon$ to be the class of languages recognized by CTQAs with bounded-error $\epsilon$. Note that when a decider
computes a constant function, the output of the scheduler only depends on the
length of each input string. This situation is similar to quantum automata
assisted by advice, as studied in~\cite{Yam14,VY15}.

\begin{theorem}\label{thm:One-CTQA}
$\mathbf{\Sigma^*}\mbox{-}\mathbf{BCTQ}_0 \nsubseteq \mathbf{KWQFA}$.
\end{theorem}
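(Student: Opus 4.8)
The plan is to exhibit a single language that lies in $\mathbf{\Sigma^*}\mbox{-}\mathbf{BCTQ}_0$ but not in $\mathbf{KWQFA}$, and the natural candidate is $\texttt{L}_1=\{w\cdot 1\mid w\in\{0,1\}^*\}$, which is already known not to be recognized by any KWQFA by \cite{AF98}. Thus the entire content of the proof is to build a CTQA whose decider computes a constant function (so that the construction lives in $\mathbf{\Sigma^*}$) and which recognizes $\texttt{L}_1$ with zero error.

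First I would set $Q=\{q_0,q_1\}$, $\Sigma=\{0,1\}$, $A=\{q_0\}$, $R=\{q_1\}$, and take a scheduler $S=(D,W)$ whose decider $D$ accepts every string; since $D$ is constant, the scheduler's output may depend only on $|x|$, as required. The key design choice is the writer: on an input of length $n$ it produces the time-schedule $(0,0,\dots,0,1)$, i.e. the value $0$ in every position except the last, where it writes $1$. This is realizable within the definition of a writer using the two constant functions $f_1\equiv 0$ and $f_2\equiv 1$, selecting $i_j=1$ for $j<n$ and $i_n=2$. For the Hamiltonians I would take $H_0=0$ (the zero operator) and $H_1={\sf Not}\frac{\pi}{2}$, so that $U_{(0,t)}=I_2$ for every $t$, while $U_{(1,t)}=R_{t\pi}$ as in Eq.~\eqref{eq:rotation}; in particular $U_{(1,1)}$ equals ${\sf Not}$ up to a global phase.

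Next I would trace the computation. Because $U_{(\sigma,0)}=I_2$ for both symbols, the first $n-1$ transition operators are all the identity, so the final state is determined entirely by the single nontrivial transition $U_{(x_n,1)}\ket{s}$. Choosing the initial state $s=q_1$ disposes of the empty string ($\epsilon\notin\texttt{L}_1$, and the machine simply remains in the rejecting state $\ket{q_1}$). For a nonempty input there are two cases: if $x_n=0$ then $U_{(0,1)}=I_2$ and the final state is $\ket{q_1}\in span(R)$, so $x$ is rejected with probability $1$; if $x_n=1$ then $U_{(1,1)}\ket{q_1}$ is a unit scalar multiple of $\ket{q_0}\in span(A)$, so $x$ is accepted with probability $1$. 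Hence $p_{M,A}(x)=1$ precisely when $x$ ends in $1$, which is exactly membership in $\texttt{L}_1$, giving recognition with bounded error $\epsilon=0$ and establishing $\texttt{L}_1\in\mathbf{\Sigma^*}\mbox{-}\mathbf{BCTQ}_0$.

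The proof has essentially no computational difficulty; the only conceptual point, and the place I expect confusion to arise, is that a constant decider makes the time-schedule \emph{blind} to the content of the input, so at first glance it seems impossible to detect the last symbol. The resolution is that the automaton nevertheless applies the symbol-dependent Hamiltonian $H_{x_j}$ at each step, so by zeroing out every time-slot except the last the writer forces all but the final symbol's evolution to be trivial, leaving $x_n$ alone to decide acceptance. The one routine subtlety to handle carefully is the empty string, which the choice $s=q_1$ settles cleanly.
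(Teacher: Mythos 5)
Your proof is correct and follows essentially the same route as the paper: the same two-state automaton with $H_0=0$ and $H_1={\sf Not}\frac{\pi}{2}$, a constant decider, and a writer emitting $(0,\dots,0,1)$, so that recognition of $\texttt{L}_1$ reduces to a single ${\sf Not}$ applied on the last symbol, with non-membership in $\mathbf{KWQFA}$ cited from \cite{AF98}. Your choice $s=q_1$ with $A=\{q_0\}$, $R=\{q_1\}$ is just a relabelling of the paper's construction (indeed a slightly cleaner one, since it is consistent about which state accepts and it disposes of the empty string explicitly).
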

\begin{proof}
Since $\texttt L_1\notin\mathbf{KWQFA}$, it suffices to prove that $\texttt L_1 \in \mathbf{\Sigma^*}\mbox{-}\mathbf{BCTQ}_0$. Let
$M=(Q,\Sigma$, $\{H_\sigma\}_{\sigma\in\Sigma},\tau,s,A,R)$, where
$Q=\{q_0,q_1\}$, $\Sigma=\{0,1\}$, $s=q_0$, $A=\{q_0\}$, $R=\{q_1\}$, $H_0=0$,
and $H_1=\left(\begin{smallmatrix} 0 & \frac{\pi}2 \\ \frac{\pi}2 &
0\end{smallmatrix}\right)$ (as in Eq.~\eqref{eq:rotation}). Thus, the
transition operators are defined as $U_{(0,t)}=I_2$ and $U_{(1,t)}=R_{t\pi}$.
In particular, $U_{(0,1)}=I_2$ and $U_{(1,1)}={\sf Not}$.

  The decider of the scheduler is defined by $D(x)=1$ for any $x\in\{0,1\}^*$,
  and the writer is defined by
  \[
    W(n,b) =\left\{
      \begin{array}{ll}
        (0,0,\dots,0,1)&\textrm{ if }b=1\\
        \\
        (0,0,\dots,0,0)&\textrm{ if }b=0.
      \end{array}
    \right.
  \]
Notice that since the decider is the constant function $1$, the scheduler will always output a time-schedule with $n-1$ zeroes and a single one in the last position. Therefore, the automaton $M$ will do nothing with the $n-1$ first symbols, and it will apply $U_{(0,1)}\ket 0=\ket 0$ if the last symbol is $0$ rejecting the input, or $U_{(1,1)}\ket 0=\ket 1$ if the last symbol is $1$ accepting the input.
\end{proof}

Restricting the decider to a constant function accepting any input, a CTQA can still
recognize a non-regular language, as stated by the following theorem. Let
$\texttt L_{a\sim b}^\lambda=\{x~|~ |x|_a=n, |x|_b=m,
\cos^2(\frac{\pi(n-m)}{2(n+m)})\geq\lambda\}$. Using a pumping argument, it can
be proved that $\texttt L_{a\sim b}^\lambda$ is not regular. Tadaki, Yamakami and Lin~\cite{TYL10} showed that the language $\texttt{L}_{a\sim b}=\{ |x|~|~ |x|_a=|x|_b=n,n \in \mathbb{N}\}$ is not in $\mathrm{REG}/n$, where $\mathrm{REG}/n$ is the class of languages recognized by deterministic finite automata assisted by advice. This fact also implies that $\texttt L_{a\sim b}^\lambda \notin \mathbf{REG}/n$. (Note that when $\lambda=1$, $\texttt L_{a\sim b}^\lambda=L_{a\sim b}$.)

\begin{theorem}\label{the:a-simb-b}
  For any $\lambda\in[0,1]$,
  $\mathbf{\Sigma^*}\mbox{-}\mathbf{CTQ}_\lambda \nsubseteq \mathbf{REG}/n$.
  \end{theorem}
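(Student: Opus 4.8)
The plan is to establish the containment $\texttt{L}_{a\sim b}^\lambda \in \mathbf{\Sigma^*}\mbox{-}\mathbf{CTQ}_\lambda$; since the excerpt already records that $\texttt{L}_{a\sim b}^\lambda \notin \mathbf{REG}/n$, this containment immediately witnesses the non-inclusion $\mathbf{\Sigma^*}\mbox{-}\mathbf{CTQ}_\lambda \nsubseteq \mathbf{REG}/n$. The whole argument is a variant of Theorem \ref{the:anbn}, and the one point that distinguishes the two is that here the decider can be taken \emph{trivial}, because the relevant transition operators commute and so the arrangement of the symbols is irrelevant.

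First I would reuse the CTQA $M$ from the proof of Theorem \ref{the:anbn}: take $Q=\{q_0,q_1\}$, $\Sigma=\{a,b\}$, $s=q_0$, $A=\{q_0\}$, $R=\{q_1\}$, $H_a={\sf Not}\frac{\pi}{2}$, and $H_b=-H_a$, so that $U_{(a,t)}=R_{t\pi}$ and $U_{(b,t)}=R_{-t\pi}$ as in Equation~\eqref{eq:rotation}. For the scheduler $S=(D,W)$ I set $D(x)=1$ for every $x\in\Sigma^*$ (a constant $\mathbf{\Sigma^*}$ decider) and let the writer output the uniform time-schedule $(\frac{1}{|x|},\dots,\frac{1}{|x|})$ on any input $x$. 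Note that the writer only needs the length $|x|$, which the scheduler can count, and never the arrangement of the symbols.

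The crux is commutativity. Since $H_b=-H_a$, both $U_{(a,t)}$ and $U_{(b,t)}$ are rotations about the $x$-axis of the Bloch sphere, and same-axis rotations form an abelian one-parameter group obeying $R_{s\pi}R_{t\pi}=R_{(s+t)\pi}$; hence the product of transition operators applied along the computation does not depend on the order in which $a$'s and $b$'s occur. Concretely, for an input $x$ with $|x|_a=n$ and $|x|_b=m$ in any interleaving, each $a$ contributes parameter $\tfrac{1}{n+m}$ and each $b$ contributes $-\tfrac{1}{n+m}$, so by an induction on Lemma \ref{lem:rotation} the net rotation parameter is $\tfrac{n}{n+m}-\tfrac{m}{n+m}=\tfrac{n-m}{n+m}$ and the final state is
\begin{align*}
&\cos\!\left(\frac{\pi(n-m)}{2(n+m)}\right)\ket{q_0}\\
&\qquad+i\sin\!\left(\frac{\pi(n-m)}{2(n+m)}\right)\ket{q_1}.
\end{align*}
Thus $M$ accepts $x$ with probability $\cos^2\!\big(\tfrac{\pi(n-m)}{2(n+m)}\big)$, which is $\geq\lambda$ exactly when $x\in\texttt{L}_{a\sim b}^\lambda$. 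Hence $M$ recognizes $\texttt{L}_{a\sim b}^\lambda$ with cutpoint $\lambda$ using a constant decider, giving membership in $\mathbf{\Sigma^*}\mbox{-}\mathbf{CTQ}_\lambda$ and completing the proof.

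I do not expect a serious obstacle: the entire content is the observation that the $a$- and $b$-rotations share an axis, which renders the ordering check of Theorem \ref{the:anbn} superfluous and lets the raw counts $n,m$ rather than their arrangement govern acceptance. The only step needing care is the inductive extension of Lemma \ref{lem:rotation} from a product of two rotations to an arbitrary interleaved product, but this is immediate from the group law for same-axis rotations and requires no new idea.
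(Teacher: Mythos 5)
Your proposal is correct and follows essentially the same route as the paper: reuse the automaton of Theorem~\ref{the:anbn} with the constant decider $D(x)=1$ and the uniform time-schedule $(\frac{1}{|x|},\dots,\frac{1}{|x|})$, so that only the counts $|x|_a=n$ and $|x|_b=m$ determine the final state, and then invoke $\texttt{L}_{a\sim b}^\lambda\notin\mathbf{REG}/n$. The paper simply asserts that the $n$ applications of $U_{(a,\frac{1}{n+m})}$ and $m$ applications of $U_{(a,-\frac{1}{n+m})}$ may occur ``in any order,'' whereas you make the underlying commutativity of same-axis rotations explicit; this is a presentational difference only.
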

  \begin{proof}
    Since $\texttt L_{a\sim b}^\lambda \notin  \mathbf{REG}/n$, it suffices to prove
  that $\texttt L_{a\sim b}^\lambda \in \mathbf{\Sigma^*}\mbox{-}\mathbf{CTQ}_\lambda$.
  It suffices to construct an automaton $M'$ similar to the automaton $M$ from the proof of Theorem~\ref{the:anbn} with a decider $D'$ defined by
  $D'(x)=1$, for any $x\in\{a,b\}^*$. Indeed, on input $x\in \texttt L_{a\sim
  b}^\lambda$ the machine $M'$ will execute the operator $U_{(a,\frac 1{n+m})}$
  $n$ times and the operator $U_{(a,-\frac 1{n+m})}$ $m$ times, in any order,
  producing  the quantum state
\[
\cos\left(\frac{\pi(n-m)}{2(n+m)}\right)\ket 0
+
i\sin\left(\frac{\pi(n-m)}{2(n+m)}\right)\ket 1.
\]

The probability of accepting a string in $\texttt L_{a\sim b}^\lambda$ is $\cos^2(\frac{\pi(n-m)}{2(n+m)})$ which is at least $\lambda$. If $x\notin \texttt L_{a\sim b}^\lambda$, then the probability of accepting $x$ is less than $\lambda$. 
  \end{proof}

It can be argued that the time-schedule demands too much precision to be
implemented. Indeed, running a unitary operator for time $\frac 1n$ with large
$n$ may be
a challenge. Fortunately, the time can be re-scaled as stated by Theorem~\ref{the:scale} below.

For any input $x$, time-schedule $\tau(x)=(\tau_1 \cdots \tau_n)$ and a positive real number $k$, we say that $k\tau(x)=(k\tau_1 \cdots k\tau_n)$ is the time-schedule $\tau$ \emph{scaled} by $k$. 
\begin{theorem}\label{the:scale}
Given any positive real constant $k$, for any CTQA $M$ with time-schedule $\tau$, there exists a CTQA $M'$ with time-schedule $\tau'$ where $\tau'$ is $\tau$ scaled by $k$ and $M'$ recognizes the same language as $M$.
\end{theorem}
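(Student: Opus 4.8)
The plan is to absorb the time-rescaling into the Hamiltonians: stretching every running time by the factor $k$ is exactly undone by dividing every Hamiltonian by $k$. Given $M=(Q,\Sigma,\{H_\sigma\}_{\sigma\in\Sigma},\tau,s,A,R)$, I would define $M'=(Q,\Sigma,\{H'_\sigma\}_{\sigma\in\Sigma},\tau',s,A,R)$ with the identical state set, alphabet, initial state and accepting/rejecting sets, with rescaled Hamiltonians $H'_\sigma=\frac{1}{k}H_\sigma$, and with the scaled schedule $\tau'=k\tau$. Each $H'_\sigma$ is again a bounded-energy time-independent Hamiltonian, since multiplying a Hermitian operator by the positive scalar $1/k$ preserves Hermiticity and, as $Q$ is finite, boundedness.

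The crux is then a one-line identity for the transition operators. For every symbol $\sigma$ and every entry $\tau_i$ of the original schedule, the matching entry of $\tau'$ is $k\tau_i$, and because scalars commute with the operator,
\[
U'_{(\sigma,k\tau_i)}=e^{-iH'_\sigma(k\tau_i)}=e^{-i(H_\sigma/k)(k\tau_i)}=e^{-iH_\sigma\tau_i}=U_{(\sigma,\tau_i)}.
\]
Hence $M'$ applies precisely the same unitary as $M$ at every step. By an immediate induction on $|x|$, starting from the common state $\ket s$, the intermediate states $\ket{\psi_i}$ of $M$ and $M'$ coincide on every input $x$; in particular the final states $\ket{\psi_n}$ are equal, so $p_{M',A}(x)=p_{M,A}(x)$ and $p_{M',R}(x)=p_{M,R}(x)$ for all $x$. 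Therefore $M'$ recognizes the same language as $M$, with the same cutpoint $\lambda$ (and the same error bound $\epsilon$ in the bounded-error case).

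The step that needs the most care is the bookkeeping for the schedule rather than the quantum computation. If $M$ is driven by a scheduler $S=(D,W)$ whose writer is built from functions $f_i,g_i$, then $M'$ can reuse the same decider $D$ and replace $W$ by the writer $W'$ obtained from $kf_i,kg_i$; scaling the writer's output by the fixed constant $k$ requires no additional computational power, so $M'$ inherits whatever complexity restriction was imposed on $M$. The one genuinely delicate point is that a time-schedule is required to take values in $\mathbb{Q}_0^+$: this is preserved automatically when $k\in\mathbb{Q}_0^+$, and for irrational $k$ one appeals to the real-valued convention fixed in the definition of ``scaled by $k$'' just above the statement.
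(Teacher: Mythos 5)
Your proposal is correct and follows essentially the same route as the paper's proof: replace each Hamiltonian $H_\sigma$ by $H_\sigma/k$, scale the schedule by $k$, and observe that the transition operators (hence the entire computation) are unchanged. Your added remarks on adjusting the writer and on the rationality of the scaled schedule for irrational $k$ are careful touches the paper omits, but they do not change the substance of the argument.
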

\begin{proof}
Let $M=(Q,\Sigma,\{H_\sigma\}_{\sigma\in\Sigma},\tau,s,A,R)$ be a CTQA and $S$
an scheduler that computes a time-schedule $\tau(x)=(\tau_1 \cdots \tau_{|x|})$
for $M$. Then, we can define
$M'=(Q,\Sigma,\{H_\sigma'\}_{\sigma\in\Sigma},\tau',s,A,R)$ where for each
$\sigma\in \Sigma$, the 
Hamiltonian $H'_\sigma$ is $H'_\sigma = \frac{H_\sigma}k$ and
  $\tau'(x)=(k\tau_1 \cdots k\tau_{|x|})$. Therefore, the transition operators $U_{(\sigma',t)}$ are defined as
   $U'_{(\sigma,t)}=U_{(\sigma,\frac tk)}$.
On input $x=x_1\cdots x_n$, the machine $M$ computes
  \[
  U_{(x_1,\tau_1)}\cdots U_{(x_n,\tau_{n})}\ket s
  =
  U'_{(x_1,k\tau_1)}\cdots U'_{(x_n,k\tau_{n})}\ket s
  \]
which is also the computation done by $M'$.
\end{proof}

To end this section, notice that for most of the proofs up to this point, we only required 2 inner states of a CTQA. Not only a CTQA assisted by an external scheduler helps in recognizing more languages, but also this fact demonstrate the succinctness of the model.

\section{Cutpoint-Union of Languages recognized by CTQAs}\label{sec:operations}
\subsection{The Cutpoint-Union Operation}

We define the \emph{cutpoint-union} of two cutpoint languages as follows. 
Let $L_1^{\lambda_1}$ and $L_2^{\lambda_2}$ be two cutpoint languages over the same alphabet $\Sigma$.
That is, there exists $M_1$ and $M_2$ such that $L_1^{\lambda_1} =
\{x\in\Sigma^* : p_{M_1,A}(x)\geq \lambda_1\}$ and  $L_2^{\lambda_2} =
\{x\in\Sigma^* : p_{M_2,A}(x)\geq \lambda_2\}$. The cutpoint-union
$L_1^{\lambda_1}\doublecup L_2^{\lambda_2}$ is the union of the languages,
where the cutpoint is taken to be the probability of the union of the accepting
events. That is, there exists a machine $M$ such that
\[
  L_1^{\lambda_1}\doublecup L_2^{\lambda_2}=
  \{x\in\Sigma^*\mid
  p_{M,A}(x)\geq p_{M_1,A}(x)+p_{M_2,A}(x)-p_{M_1,A}(x)p_{M_2,A}(x)\}.
\]

Now we prove that if $M_1$ and $M_2$ are CTQAs with deciders agreeing on all
inputs, then we can construct another CTQA with a rotating tape head
recognizing the cutpoint-union of the languages.

\subsection{\texorpdfstring{$k$}{k}-Rotating MCQFAs and Language Closure}
A \emph{$k$-rotating Moore-Crutchfield quantum finite automaton} ($k$MCQFA),
with $k = 2^n$, with $n\geq 1$, is a
variation of a MCQFA where the automaton is equipped with a circular tape with
endmarkers and a unary counter up to $k$ named ``K''. The input tape is scanned  exactly $k$ times with a partial
measurement over the counter every time the left marker is scanned, and a global
measurement at the end. More formally, let $\hat{\Sigma}=\Sigma\cup
\{\vdash,\dashv\}$ where the symbols $\vdash$ and $\dashv$ are the left and
right endmarkers respectively. We define a $k$MCQFA $M=(Q\times C,\hat
\Sigma,\{U_\sigma\}_{\sigma\in\Sigma},s,A,R)$ where
$C=\{0,1\}^n$.

In order to incorporate a time-schedule into the aforementioned $k$MCQFA we define a $k$CTQA $M=(Q,\hat \Sigma,\{H_\sigma\}_{\sigma\in\Sigma},\tau,s,A,R)$ where
$\tau$ maps $x$ to a sequence of $k|x|+2k$ rational real
numbers of the form
\begin{align}\nonumber
  (&0, t_{(1,1)},\dots,t_{(1,|x|)}, 0,\\\nonumber
    &0,t_{(2,1)},\dots,t_{(2,|x|)},0,\\\nonumber
    &\dots,\\
  &0,t_{(k,1)}\dots,t_{(k,|x|)},0).
  \label{eq:rotation-schedule}
\end{align}
That is, the writer always writes a schedule as in
Eq.~\eqref{eq:rotation-schedule} in order to consider the left and right
endmarkers and deliver a (possibly different) time-schedule for each sweep of
the input tape head.

The Hilbert space of $M$ is $\mathbb C^{Q\times C}$.
Let $\ket{\psi_i,r}$ be a quantum state of $M$ in step $i$ scanning $x_j\in\Sigma$ at the $r$-th sweep. Then we have that
\[
\ket{\psi_{i+1},r}=U_{(x_j,t_{(r,j)})}\ket{\psi_i,r}
\]
where $U_{(x_j,t_{(r,j)})}$ does not modify the counter K.

Every time the left endmarker is encountered, the transition operator of $M$ acts
as the identity.
When a right endmarker is encountered, the partial measurement is defined by the
collection of projectors $\{M_c\}_{c\in\{0,1\}^n}$ defined by
$I\otimes\ketbra{c+1}c$, where $+$ is the sum modulo $2^n$. When the
output of such a
measurement is $k$, then a global measurement with the observable $E_N\oplus E_A\oplus
E_R$ is performed, where
\begin{itemize}
\item $E_A=span\{\ket{q,k}\mid q\in A\}$,
\item $E_R=span\{\ket{q,k}\mid q\in R\}$, and
\item $E_N=span\{\ket{q,k}\mid q\in\overline{A\cup R}\}$.
\end{itemize}

\begin{theorem}\label{the:union-c}
  Let $\mathcal C$ be any complexity class, $\texttt{L}_1\in \mathcal{C}\mbox{-}\mathbf{CTQ}_{\lambda_1}$, and $\texttt{L}_2 \in \mathcal{C}\mbox{-}\mathbf{CTQ}_{\lambda_2}$. If the deciders for $\texttt{L}_1$ and $\texttt{L}_2$ agree on all inputs, then there exists a 2CTQA that recognizes $\texttt L_1^{\lambda_1} \doublecup \texttt L_2^{\lambda_2}$ with cutpoint $\lambda=\lambda_1+\lambda_2-\lambda_1\lambda_2$.
\end{theorem}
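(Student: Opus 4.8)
The plan is to place the two given automata ``side by side'' inside one $2$CTQA and let the circular tape devote one sweep to each. Write $M_1=(Q_1,\Sigma,\{H^{(1)}_\sigma\},\tau^{(1)},s_1,A_1,R_1)$ and $M_2=(Q_2,\Sigma,\{H^{(2)}_\sigma\},\tau^{(2)},s_2,A_2,R_2)$ for CTQAs witnessing $\texttt{L}_1\in\mathbf{C}\mbox{-}\mathbf{CTQ}_{\lambda_1}$ and $\texttt{L}_2\in\mathbf{C}\mbox{-}\mathbf{CTQ}_{\lambda_2}$, and take the state space of the new machine to be $\mathbb{C}^{Q_1}\otimes\mathbb{C}^{Q_2}$ tensored with the counter $C=\{0,1\}$, with initial state $\ket{s_1}\otimes\ket{s_2}$. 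I would use the counter, which the partial measurement at the right endmarker increments deterministically and which never disturbs the $Q$-registers, as a \emph{control} for the dynamics: set $H_\sigma=(H^{(1)}_\sigma\otimes I)\otimes\ketbra{0}{0}+(I\otimes H^{(2)}_\sigma)\otimes\ketbra{1}{1}$, so that during the first sweep ($c=0$) the transition operator factors as $U^{(1)}_{(\sigma,t)}\otimes I$ and rotates only the first register, while during the second sweep ($c=1$) it acts as $I\otimes U^{(2)}_{(\sigma,t)}$ and rotates only the second.

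The hypothesis that the two deciders agree is exactly what makes a single scheduler legitimate: on input $x$ the common decider produces one bit $b$, and the writer can emit a schedule of the shape of Eq.~\eqref{eq:rotation-schedule} whose first block $t_{(1,\cdot)}$ equals $\tau^{(1)}(x)$ and whose second block $t_{(2,\cdot)}$ equals $\tau^{(2)}(x)$. Because the two blocks act on different tensor factors and are separated only by the benign counter measurement, the state after both sweeps is the product $\ket{\psi_1}\otimes\ket{\psi_2}$, where $\ket{\psi_i}$ is precisely the final state $M_i$ would reach on $x$. Consequently the four outcome weights factorize: measuring the two registers independently gives the accepting/accepting, accepting/rejecting, rejecting/accepting and rejecting/rejecting cells with probabilities $p_1p_2$, $p_1(1-p_2)$, $(1-p_1)p_2$ and $(1-p_1)(1-p_2)$, where $p_i=p_{M_i,A_i}(x)$.

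For the final global measurement the natural candidate is to make the accepting subspace the span of all $\ket{q_1,q_2}$ (at the terminal counter value) with $q_1\in A_1$ or $q_2\in A_2$, i.e.\ ``at least one register accepts,'' so that a string in $\texttt{L}_1$ already contributes the whole first-register weight $p_1\ge\lambda_1$ and a string in $\texttt{L}_2$ contributes $p_2\ge\lambda_2$. Evaluated at the threshold point $(p_1,p_2)=(\lambda_1,\lambda_2)$, the three accepting cells carry exactly the weights $\lambda_1(1-\lambda_2)$, $\lambda_1\lambda_2$ and $(1-\lambda_1)\lambda_2$, which is where the stated $\lambda=\max\{\lambda_1(1-\lambda_2),\lambda_1\lambda_2,(1-\lambda_1)\lambda_2\}$ originates; the forward inclusion would be argued by showing that for a union element at least one of these cells on its own already reaches $\lambda$.

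The step I expect to be the genuine obstacle is the converse separation — certifying that every $x\notin\texttt{L}_1\cup\texttt{L}_2$ stays strictly below $\lambda$ while no union element is lost. The acceptance weight is a \emph{bilinear} function of $(p_1,p_2)$, whereas $\texttt{L}_1\cup\texttt{L}_2$ corresponds to the complement of the rectangle $[0,\lambda_1)\times[0,\lambda_2)$, so the two regions cannot be matched by a generic level-set of the measurement alone; the argument must therefore lean on the concrete behaviour of the schedulers rather than on the product structure in isolation. Here I would exploit that the \emph{common} decider rejects exactly the ``ill-formed'' strings and that on those the writer forces a reject-with-certainty schedule, as in the proofs of Theorems~\ref{thm:Reg-CTQA} and \ref{thm:One-CTQA}, collapsing $p_1=p_2=0$ and killing all accepting cells. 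Reconciling the remaining decider-accepting inputs with the exact-cutpoint definition of recognition, and pinning down the precise accepting subspace and the role of the maximum in $\lambda$ so that both the inclusion and the exclusion hold simultaneously, is the delicate heart of the proof and where I would concentrate the work.
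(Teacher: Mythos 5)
Your construction is exactly the paper's: the product automaton on $\mathbb{C}^{Q_1\times Q_2}$ with counter $C=\{0,1\}$, counter-controlled Hamiltonians so that the first sweep runs $M_1$ and the second runs $M_2$, the common decider feeding a writer that concatenates the two schedules in the form of Eq.~\eqref{eq:rotation-schedule}, accepting set $(A_1\times Q_2)\cup(Q_1\times A_2)$, and the unentangled final state yielding the four cells $p_1p_2$, $p_1(1-p_2)$, $(1-p_1)p_2$, $(1-p_1)(1-p_2)$. The ``delicate heart'' you defer is not resolved in the paper either: its proof merely records the lower bounds $\lambda_1\lambda_2$, $\lambda_1(1-\lambda_2)$ and $(1-\lambda_1)\lambda_2$ in the three membership cases and declares $\lambda$ to be their maximum, with no argument that strings outside $\texttt{L}_1\cup\texttt{L}_2$ are accepted with probability below $\lambda$, so you have reproduced---and candidly flagged the limits of---the published argument.
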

\begin{proof}
  Let
  \begin{align*}
    M_1&=(Q_1,\Sigma,\{H^1_\sigma\}_{\sigma\in\Sigma},\tau_1,s_1,A_1,R_1)\text{
    and}\\
    M_2&=(Q_2,\Sigma,\{H^2_\sigma\}_{\sigma\in\Sigma},\tau_2,s_2,A_2,R_2)    
  \end{align*}
  be CTQAs recognizing $\texttt{L}_1$ and $\texttt{L}_2$, respectively. We define a $2$CTQA $M$ recognizing $\texttt{L}=\texttt{L}_1\cup\texttt{L}_2$.

Define $M=(Q\times C,\hat \Sigma,\{H_\sigma\}_{\sigma\in\Sigma},\tau,s,A,R)$ where
\begin{itemize}
\item $Q=Q_1\times Q_2$,
\item $C=\{0,1\}$,
\item $A=(A_1\times Q_2)\cup (Q_1\times A_2)$,
\item $R=R_1\times R_2$,
\item $s=(s_1,s_2)$,
\item if $\tau_1(n)=(\tau_{11} \cdots \tau_{1n})$ and
  $\tau_2(n)=(\tau_{21} \cdots \tau_{2n})$, then
  \[
    \tau(n)=(0,\tau_{11} \cdots \tau_{1n},0,0,\tau_{21} \cdots \tau_{2n},0).
  \]
\item The Hamiltonians $H_\sigma$ for $\sigma\in \Sigma$ (with no
endmarkers)  are defined as $H_\sigma = \frac{-\log(U_{(\sigma,t)})}{it}$, where
$U_{(\sigma,t)}$ are defined as
\begin{center}
\large
\parbox{\textwidth}{
  \Qcircuit @C=1em @R=.7em{
  &&\lstick{\ket q} & \gate{V_{(\sigma,t)}} & \qw & \qw\\
  &&\lstick{\ket p}& \qw & \gate{W_{(\sigma,t)}} & \qw\\
  &&\lstick{\ket c}& \ctrlo{-2} & \ctrl{-1} & \qw
}
}
\end{center}
where $q\in Q_1$, $p\in Q_2$, $c\in C$, and $V_{(\sigma,t)}$ is calculated from
$H_\sigma^1$ and $W_{(\sigma,t)}$ from $H_\sigma^2$.
\end{itemize}

Now we consider the construction of a scheduler $S=(D,W)$ for $M$. Let $S_1=(D_1,W_1)$ and $S_2=(D_2,W_2)$ be the schedulers for $M_1$ and $M_2$ respectively. Since the deciders $D_1$ and $D_2$ agree on all inputs, that is $D_1(x)=D_2(x)$ for all $x\in \Sigma^*$, we let $D=D_1$. Then the writer $W$ on input $D(x)$ and $|x|$ constructs a time-schedule 
\[
(0,W_1(D(x),|x|),0,0,W_2(D(x),|x|),0).
\]

Let $x\in \texttt{L}_1\cup \texttt{L}_2$. After $M$ makes two sweeps of an
input tape,  note that there is never any entanglement between subspaces
$\mathbb{C}^{Q_1}$ and $\mathbb{C}^{Q_2}$. Thus, a final quantum state
(disregarding the counter) is of the form
\begin{align}\nonumber
  \sqrt{p_1}\sqrt{p_2} & \ket{rs} \\\nonumber
  +\sqrt{p_1}\sqrt{1-p_2} & \ket{r\bar{s}}\\\nonumber
  +\sqrt{1-p_1}\sqrt{p_2}&\ket{\bar{r}s}\\
  +\sqrt{1-p_1}\sqrt{1-p_2}&\ket{\bar{r}\bar{s}},\label{eq:final-state}
\end{align}
where $\ket r$ and $\ket s$ are states in $\mathbb{C}^{A_1}$ and $\mathbb{C}^{A_2}$ respectively, and $\ket{\bar{r}}$ and $\ket{\bar{s}}$ are non-accepting inner states of $\mathbb{C}^{Q_1-A_1}$ and $\mathbb{C}^{Q_2-A_2}$ respectively.

The accepting probability of $M$ is 
\begin{align*}
  1-(\sqrt{1-p_1}\sqrt{1-p_2})^2
  = p_1+p_2-p_1p_2
  \geq \lambda_1+\lambda_2-\lambda_1\lambda_2.
  \tag*{\qedhere}
\end{align*}
\end{proof}

The following corollary shows that for the case when the scheduler of both CTQAs $M_1$ and $M_2$ are the same, then a 1CTQA suffices to recognize the cutpoint-union of two languages.

\begin{corollary}\label{cor:union-scheduler}
  Let $\texttt{L}_1^{\lambda_1}$ and
  $\texttt{L}_2^{\lambda_2}$. If the CTQAs for $\texttt{L}_1$ and $\texttt
  L_2$ have the same scheduler $S$, then there exists a CTQA with scheduler $S$
  that recognizes $\texttt{L}_1^{\lambda_1}\doublecup \texttt{L}_2^{\lambda_2}$ with cutpoint $\lambda=\lambda_1+\lambda_2-\lambda_1\lambda_2$.
\end{corollary}
\begin{proof}
  The construction of a CTQA $M$ recognizing
  $\texttt{L}_1^{\lambda_1}\doublecup \texttt{L}_2^{\lambda_2}$ is similar to
  the proof of Theorem~\ref{the:union-c} with the exception that there is no
  extra bit to remember the stage of the sweep of an input tape and the
  definition of each $H_\sigma$. Since the scheduler for $\texttt L_1$ and
  $\texttt L_2$ are the same, we can define the Hamiltonians of $M$ as
  $H_\sigma=H^1_\sigma\oplus H^2_\sigma$, that is, the direct sum of the
  individual Hamiltonians, and so, the transition operators are
  $U_{(\sigma,t)}=V_{(\sigma,t)}\otimes W_{(\sigma,t)}$. Hence, the probability
  that $M$ accepts $x\in \texttt{L}_1^{\lambda_1}\doublecup
  \texttt{L}_2^{\lambda_2}$ is at least
  $\lambda_1+\lambda_2-\lambda_1\lambda_2$.
\end{proof}

\section{Concluding Remarks and Open Problems}\label{sec:open}
{W}e introduce a new model of quantum computation with classical
control called CTQA (for classically time-controlled quantum automata). {In a CTQA} an
infinite number of unitary operators are defined from the evolution of
Hamiltonians associated to each symbol and their time executions are externally
controlled by a scheduler.

We show in Theorem~\ref{thm:Halting} that if Moore-Crutchfield quantum automata
use  unitary operators issued from the evolution times of Hamiltonians with
unrestricted time-schedules, then they can recognize non-recursive languages.
If the scheduler is defined via a finite-state automaton, however, a CTQA can
recognize non-regular languages, as shown {by} Theorems~\ref{the:anbn}
and~\ref{the:a-simb-b}.

In Figure~\ref{fig:diagram}, we describe the containment relationships between the classes that we have shown in this paper. A dashed arrow indicates non-inclusion, while a solid arrow indicates inclusion. Additionally, in Theorem~\ref{the:scheduler-power}, we prove that a $\mathcal C-\mathbf{BCTQ}_0$ has at least as much power as its decider. In Theorem~\ref{the:bounded-error}, we show that a non-trivial promise language can be recognized by a CTQA with bounded error, and in Theorem~\ref{the:scale}, we show that the time-schedule does not need to consider arbitrarily large real numbers. Finally, in Theorem~\ref{the:union-c}, we show that the cutpoint-union of two languages recognized by CTQAs can be recognized by a 2CTQA, and in Corollary~\ref{cor:union-scheduler}, we show that if the schedulers of two CTQAs are the same, then a 1CTQA suffices to recognize the cutpoint-union of the languages.

\begin{figure}[h]
\centering
\begin{tikzpicture}[x=3cm,y=2cm]
\node (CTQ) at (0,0) [draw, rectangle, rounded corners] {$\mathbf{BCTQ}$ = ALL};
\node (SIGMA_CTQ_l) at (1,-1) [draw, rectangle, rounded corners] {$\mathbf{\Sigma^*}\mbox{-}\mathbf{CTQ}_\lambda$};
\node (KWQFA_CTQ_l) at (0,-2) [draw, rectangle, rounded corners] {$\mathbf{KWQFA}\mbox{-}\mathbf{CTQ}_\lambda$};
\node (Sigma_BCTQ_0) at (-1,-2) [draw, rectangle, rounded corners] {$\mathbf{\Sigma^*}\mbox{-}\mathbf{BCTQ}_0$};
\node (REG_n) at (1,-2) [draw, rectangle, rounded corners] {$\mathbf{REG}/n$};
\node (REG) at (0,-3) [draw, rectangle, rounded corners] {$\mathbf{REG}$};
\node (KWQFA) at (0,-4) [draw, rectangle, rounded corners] {$\mathbf{KWQFA}$};
\node (MCQFA) at (-1,-5) [draw, rectangle, rounded corners] {$\mathbf{MCQFA}$};
\node (ti_BCTQ) at (1,-5) [draw, rectangle, rounded corners] {$\mathbf{ti}\mbox{-}\mathbf{BCTQ}$};
\draw[<->] (ti_BCTQ) -- (MCQFA) node[midway, above] {Thm.~\ref{thm:time-independent}};
\draw[->] (MCQFA) -- (KWQFA) node[midway,above,sloped] {\cite{AF98}};
\draw[->] (KWQFA) -- (REG) node[midway,right] {\cite{AF98}};  
\draw[->, dashed] (KWQFA) -- (Sigma_BCTQ_0) node[midway,below,sloped] {Thm.~\ref{thm:One-CTQA}};
\draw[->, dashed] (REG) -- (KWQFA_CTQ_l) node[midway,left] {Thm.~\ref{the:anbn}};
\draw[->] (REG) -- (REG_n) node[midway,below,sloped] {\cite{TYL10}};
\draw[->, dashed] (REG_n) -- (SIGMA_CTQ_l) node[midway,right] {Thm.~\ref{the:a-simb-b}};
\draw[->] (Sigma_BCTQ_0) -- (CTQ) node[midway,above,sloped] {Thm.~\ref{thm:Halting}};
\draw[->] (KWQFA_CTQ_l) -- (CTQ) node[pos=0.3,left] {Thm.~\ref{thm:Halting}};
\draw[->] (REG_n) -- (CTQ) node[midway,sloped,below] {Thm.~\ref{thm:Halting}};
\draw[->] (SIGMA_CTQ_l) -- (CTQ) node[midway,sloped,above] {Thm.~\ref{thm:Halting}};
\end{tikzpicture}
\caption{Containment relationships between classes. Solid arrows represent inclusion, while dashed arrows represent non-inclusion.}
\label{fig:diagram}
\end{figure}

The CTQA model is an interesting model to study quantum
computations assisted by a classical control that can tune or adjust execution
times. Below we present some interesting open problems that remain from this
work.
\begin{enumerate}
\item \emph{Upper bound for classes of languages recognized by CTQAs}. To prove
  an upper bound on the simulation of CTQAs we require a simulation of the
  behavior of schedulers. Since a scheduler outputs real numbers, it is necessary
  to consider computable real numbers and study how much error in the
  time-dependent computation is introduced. Notice however that all the
    results given in this paper use schedulers outputting just fractional numbers.
\item \emph{Closure of well-known operations}. We showed a sufficient condition for the closure of CTQAs under cutpoint-union (Corollary~\ref{cor:union-scheduler}). It would be interesting to continue this line of study to find new conditions for the closure of CTQAs under union, as well as under other operations such as intersection, homomorphism, inverse homomorphism, etc., in a bounded-error setting with cut points.
\item \emph{Impossibility results}. It would be interesting to see a lower bound technique for CTQAs analogous to a pumping lemma in order to obtain some impossibility result.
\item \emph{Recognition power of $k$MCQFAs}. In order to study the cutpoint-union of two languages that are recognized by CTQAs, we introduced a new idea of rotating automata or $k$MCQFA. Whether the class of languages recognized by $k$MCQFAs (with no scheduler) equals the class of languages recognized by MCQFAs is an open and interesting problem.
\item \emph{Simulation of advised quantum computation.} As we mentioned in Section~\ref{sec:definition}, simulation of advised quantum automata using CTQAs might require some new ideas for Hamiltonian interpolation. We do not believe that interpolation is the only way to simulate advised computation, however, developing techniques on how to simulate advice using time-dependent Hamiltonians can be an interesting research subject.
\item\emph{Computational restrictions on the writer.} When we incorporated
  computational restrictions to a scheduler, only the decider was affected and
  the writer was left untouched. We can, however, also restrict the
  computational resources of the writer. This can lead to interesting new
  research questions like how much precision a time-schedule needs in order to
  do reliable computations.
\end{enumerate}


\section*{Acknowledgments}
We wish to thank Ariel Bendersky, Federico Holik, and Malena Ivnisky for their
comments on an earlier version of this paper.

\bibliographystyle{abbrvnat}
\bibliography{biblio}

\appendix
\section{Proof of the Non-Regularity of \texorpdfstring{$\texttt{L}_{ab}^\lambda$ for any $\lambda>0$}{Lablamb for any lamb>0}.}\label{ap:non-regular}
To prove the non-regularity of $\texttt{L}_{ab}^\lambda$ we use a pumping argument. Recall that the pumping lemma for regular languages states that if $\texttt{L}$ is regular, then there exists an integer $p\geq 1$ (the pumping length) such that any string $w\in \texttt{L}$ can be written $w=xyz$ where (1) $|y|\geq 1$, (2) $|xy|\leq p$, and (3) for all $n$ it holds $xy^nz\in \texttt{L}$.

Suppose that $\texttt{L}_{ab}^\lambda$ is regular and consider the string $w=a^pb^p\in \texttt{L}_{ab}^\lambda$, where $p$ is the pumping length. Then, for some $i\geq 1$, we have that $y=a^i$, $x=a^{p-i}$, and $z=b^p$.

Now consider the string $a^{p-i}a^{in}b^p$ for $n\geq 1$. Since $a^pb^p\in \texttt{L}_{ab}^\lambda$, it holds that $\cos^2\left(\frac{\pi(in-i)}{2(2p-i+in)}\right)\geq \lambda$ when $n=1$. When $n>1$, however, note that
\[
\lim_{n\to \infty}\frac{in-i}{2p-i+in}=1,
\]
and hence, there exists $n_0>1$ such that $\cos^2\left(\frac{\pi(in_0-i)}{2(2p-i+in_0)}\right)<\lambda$. Therefore, the string $xy^{n_0}z=a^{p-i}a^{in_0}b^p$ is not in  $\texttt{L}_{ab}^\lambda$ and cannot be pumped. This is a contradiction with the pumping lemma, and thus, it implies that $\texttt{L}_{ab}^\lambda$ is not regular. Note that this argument does not work with $\lambda=0$.

\end{document}